\title[Darboux transformations for 
operators $\partial_x\partial_y +a \partial_x + b\partial_y +c$]{Classification of Darboux transformations for 
operators of the form $\partial_x\partial_y +a \partial_x + b\partial_y +c$}
\author{Ekaterina SHEMYAKOVA}
\address{Department of Mathematics and Statistics, The University of Toledo, Ohio, USA}
\email{Ekaterina.Shemyakova@UToledo.edu}
\newcommand{\dee}{\partial}
\newcommand{\Ko}{K}
\newcommand{\vars}{x,y, \dots}
\newcommand{\K}{\ensuremath K[\partial_x,\partial_y, \dots]}
\def\t#1#2#3{#1\stackrel{#2}{\longrightarrow}{#3}}
\def\o#1{\ensuremath #1}
\newcommand{\comm}[1]{\textcolor{black}{#1}}
\newtheorem{theorem}{Theorem}[section]
\newtheorem{lemma}{Lemma}[section]
\newtheorem{proposition}{Proposition}[section]
\newtheorem{corollary}[theorem]{Corollary}
\theoremstyle{remark}
\theoremstyle{definition}
\newtheorem{definition}{Definition}[section]
\begin{document}
\begin{abstract}
Darboux transformations are non-group type symmetries of linear  differential operators.
One can define Darboux transformations algebraically by the  intertwining relation $ML=L_1M$ or the  intertwining relation 
$ML=L_1N$ in the cases when the former is too restrictive.

Here we show that Darboux transformations for operators of the form $L=\partial_x\partial_y +a \partial_x + b\partial_y +c$
(sometimes referred to  as 2D Schr\"odinger operators or as to Laplace operators) are always   compositions of atomic Darboux transformations of two different well-studied types of Darboux transformations, provided that the chain of Laplace transformations for the original operator is long enough. 
%
%
\end{abstract}
\keywords{Darboux transformations; intertwining relation; classification}

\subjclass[2010]{16S32; 37K35; 37K25} 


\maketitle

\section{Introduction}

A Darboux transformation (DT), in the general sense of the word, is a non-group symmetry 
of linear  differential operators (partial or ordinary), which simultaneously transforms the  kernels (solution spaces) or eigenspaces.  Darboux transformations form a category. 
Their structure is under active investigation. See, in particular, 
recent (2018) works of Ch.~Athorne~\cite{Athorne2018}, S.~Smirnov~\cite{Smirnov2018},
and G.~Hovhannisyan et al.~\cite{HOVHANNISYAN2018776, HOVHANNISYAN20161690}.

Darboux transformations originated in the work of Darboux and others on the theory of surfaces, as in~\cite{Darboux2}, 
while particular examples were known to  Euler and Laplace. In XXth century, Darboux transformations were rediscovered 
in quantum mechanics~\cite{shroedinger_on_DT,infeld_hull_factorization_method}, and later, in 1970s,
in Integrable Systems.
A large number of works can be mentioned here, 
but the very name `Darboux transformations' was introduced by V.~B.~Matveev in 1979 in a series of works, see e.g.~\cite{Matveev79}, 
and then the theory was elaborated further in the fundamental monograph~\cite{matveev:1991:darboux} by V.~B.~Matveev and M.~A.~Salle (see also~\cite{doktorovleble2007dressing}).


The model example of Darboux transformations is the following transformation of single variable Sturm--Liouville operators: $L\to L_1$, where $L=\dee_x^2+u(x)$,  $L_1=\dee_x^2+u_1(x)$,  and $u_1(x)$ is 
obtained from $u(x)$ by the formula $u_1(x)=u(x) + 2(\ln\varphi_0(x))_{xx}$. Here $\varphi_0(x)$ is a `seed' solution of the Sturm--Liouville equation $L\varphi_0=\lambda_0\varphi_0$ (with some fixed $\lambda_0$). 
Then the transformation $\widetilde{\varphi} = (\dee_x-(\ln\varphi_0)_x)\varphi$ sends solutions of $L\varphi=\lambda\varphi$ to solutions of $L_1\widetilde{\varphi}=\lambda\widetilde{\varphi}$ (with the same $\lambda$). The seed solution is mapped to zero. (According to~\cite{novikov-dynnikov:1997}, this transformation was already known to Euler.) This example is  model in two ways. 

First, the formula for the transformation of solutions can be written in terms of a Wronskian determinant: $\varphi_1=W(\varphi_0,\varphi)/\varphi_0$. This generalizes to a construction based on several linearly independent seed solutions and higher order Wronskian determinants (Crum~\cite{Crum:1955:1999} for Sturm--Liouville operators and Matveev~\cite{Matveev79} for general operators on the line). 

Secondly, if one lets $M=\dee_x-(\ln\varphi_0)_x$, then the following identity is satisfied:
\begin{equation}\label{intertw1}
ML=L_1M\,.
\end{equation}
This identity is equivalent to the relation between the old potential $u(x)$ and the new potential $u_1(x)$.

This example can be generalized to the case of two operators with the same principal symbol, $L$ and $L_1$, 
that satisfy~\eqref{intertw1} for some $M$. In this case,~\eqref{intertw1} is called the \emph{intertwining relation}.
One can see that if~\eqref{intertw1} is satisfied,  then the operator $M$ defines a linear transformation of the eigenspaces of $L$ to the  eigenspaces of $L_1$ (for all eigenvalues). The relation~\eqref{intertw1} can be taken as a definition of the Darboux transformations.  The intertwining relation~\eqref{intertw1} appeared, for Sturm--Liouville operators, in work  of Shabat~\cite[eq.~19]{shabat:1992},  Veselov--Shabat~\cite{veselov-shabat:dress1993}, and Bagrov--Samsonov~\cite{bagrov-samsonov:factorization1995}.
Intertwining relation~\eqref{intertw1} is also related with supersymmetric quantum mechanics initiated by E.~Witten~\cite{witten1982}, see in particular~\cite{Cooper_Khare95,Ioffe_Junker99}. In 2D case, intertwining relation~\eqref{intertw1} was studied in the series of papers by 
A.~Andrianov, F.~Cannata, M.~Ioffe, see e.g.~\cite{ioffe2010separation_of_vars} and references therein.
It also appeared for higher dimensions, in Berest--Veselov~\cite{berest-veselov:1998, berest-veselov:2000} 
for the Laplace type operators $L=-\Delta +u$. The same relation was used in~\cite{2015:super} for differential operators on the superline. 

\begin{displayquote}
\textit{{Problem 1: classification of all Darboux transformations defined by intertwining relation~\eqref{intertw1}.}}
\end{displayquote}

Problem 1 is solved for operators on the line and on the superline, which is an analogue of the 1D case, but 
has in principle two variables, one even (``bosonic'') and one odd (``fermionic'').
In both of these cases, it was established that 
all Darboux transformations (for any operator $L$; in supercase an extra condition of non-degeneracy is required) is a composition of atomic Darboux transformations of first order which arise from seed solutions and are given by ``Wronskian formulas''. In the supercase, 
these involve Berezinians. 

In the classical setting for the Sturm--Liouville operators, this was   proved in~\cite[Theorem 5]{veselov-shabat:dress1993} when the new potential $u_1(x)$  differs
from the initial  $u(x)$ by a constant, $u_1(x)=u(x) + c$.
It was proved in~\cite{shabat1995} for  transformations of order two; and, finally in the general case, 
in~\cite{bagrov-samsonov:factorization1995} and the follow-up paper~\cite{clouds:bagrov:samsonov:97}, 
see also~\cite[Sec.3]{samsonov:factorization1999}. Finally, for general operators on the line, 
it was proved in~\cite{adler-marikhin-shabat:2001}. 
For the superline, the classification was obtained in~\cite{2015:super,shemya:voronov2016:berezinians}.

The intertwining relation~\eqref{intertw1}, with a single transformation operator $M$, 
implies a linear mapping between the eigenspaces of the operators $L$ and $L_1$. It turns out that such symmetries are too restrictive 
for differential operators in higher dimensions. So the following more flexible version is considered,
which can be extracted  from the work of Darboux himself~\cite{Darboux2} 
(see for example~\cite[eq.~2]{TsarevS2009}):
\begin{equation}\label{intertw2}
NL=L_1M\,.
\end{equation}
Relation~\eqref{intertw2} implies a linear mapping of the kernels only. 
The connection between intertwining relations~\eqref{intertw1} and~\eqref{intertw2} is yet to be understood. 
In the present paper we solve the following problem for an important class of operators. 

\begin{displayquote}
\textit{{Problem 2: classification of all Darboux transformations defined by intertwining relation~\eqref{intertw2}.}}
\end{displayquote}
 

Note that one of the differences between the two intertwining relations, is that the 
intertwining relation~\eqref{intertw2} admits the following equivalence relation: $(M,N) \sim (M+AL,N+L_1A)$, for an arbitrary linear partial differential operator $A$.
The importance of this is that using it one can define invertible Darboux transformations in a natural
way, first suggested in~\cite{2013:invertible:darboux}, 
see also in the introduction of~\cite{shemya:hobby2016:iterated}  (and more in Sec.~\ref{sec:Known_types_of_Darboux transformations}). It it also the point of view that we have to take if we want
to factor Darboux transformations, and this is the one we use in this paper. 
Putting everything together, we essentially have a category where objects are operators $L$ of the same fixed principal symbol 
and Darboux transformations are morphisms in this category (again see more in Sec.~\ref{sec:Known_types_of_Darboux transformations}).
 
In work~\cite{shem:Darboux2}, we  classified Darboux transformations
for operators of the form
\begin{equation} \label{op:L2}
{L} = \partial_x\partial_y +a \partial_x + b\partial_y +c 
\end{equation}
with $a=a(x,y)$, $b=b(x,y)$, $c=c(x,y)$, where operator $M$ is  
of order $2$. This type of operators~\eqref{op:L2} can be called `2D Schr\"odinger operator' in `algebraic form'~\footnote{Calling
it Schr\"odinger is some abuse of language justified by the fact that on the formal algebraic level it is
equivalent to the actual elliptic Schr\"odinger operator~\cite{novikov-dynnikov:1997}.},
In~\cite{shem:Darboux2}, regularized moving frames of 
Fels and Olver~\cite{OP:09} were used to invariantize the system of nonlinear partial differential equations
describing the Darboux transformations. This system, when transformed on the space of joint differential invariants of the 
operators $M$ and $L$, becomes simple and can be solved exactly. 

In the present paper, we prove factorization for Darboux transformations of arbitrary order defined by the intertwining relation~\eqref{intertw2},
 for operators of type~\eqref{op:L2}. Specifically we proved that every Darboux transformation can be presented as a consequent application of Darboux transformations of two classically known types:  Wronskian type and Laplace transformations. Such a factorization is only possible if we consider Darboux transformations as equivalence classes up to the equivalence relation mentioned above (see Definition~\ref{def:darb}). We obtain our result under the technical assumption of the existence of a long enough Laplace chains for the original operator. 
  
 The proof is an induction on the order of the operator $M$. At every step,
if the intersection of the kernels of $L$ and $M$ is non-zero, we use a common element $\varphi$ to construct the first-order operator
$M_\varphi=\partial_x - \varphi_x \varphi^{-1}$, and $M=M' M_\varphi$. 
We then show that for $M'$ and $M_\varphi$ there are corresponding Darboux transformations which provide a factorization of the initial Darboux transformation.
If the intersection of the kernels of $L$ and $M$ is zero, we show that we can similarly factor out 
a Laplace Darboux transformation.

Since the first versions of this work appeared on the arXiv, its results 
were commented in several papers. Very recently, in~\cite{Smirnov2018},
S.~Smirnov by a modification of the proof in this paper was able to establish the discrete and semi-discrete analogues of our main statement. 

The structure of the paper is simple. In Sec.~\ref{sec:Known_types_of_Darboux transformations}, 
we recall the key results about Darboux transformations, their known types, and as much as it is known so far about 
their general structure. In Sec.~\ref{sec:fac} we give a proof of the main statement.

%
\section{Structure and known types of Darboux transformations}
\label{sec:Known_types_of_Darboux transformations}

In this section we recall some general facts about Darboux transformations. 
Consider a differential field $K$ of characteristic zero with a number of commuting derivations.
We shall work with the corresponding ring of linear partial differential operators over $K$.
One can either assume the field $K$ to be differentially closed, or simply assume that $K$ contains the solutions of those partial differential equations that we encounter on the way.

Darboux transformations viewed as mappings of linear partial differential operators can be defined algebraically as follows,
where we write $\sigma(L)$ for the principal symbol of $L$. 
\begin{definition}\label{def:darb}~\cite{shem:Darboux2,2015:inv:charts}
We introduce a category where
the objects are linear partial differential operators with the same arbitrary but fixed principal symbol, and the morphisms are defined as follows. 
A pair $({M},{N})$ defines a \emph{morphism} from ${L}$ to ${L}_1$ if 
\[
{N} {L} = {L}_1 {M}
\]
and such pairs $({M},{N})$ are considered up to the equivalence
\[
({M},{N}) \sim ({M} +{A} {L} \  , 
{N} + {L}_1 {A} ) \,,
\]
where ${A}$ is an arbitrary linear partial differential operator. Another notation for this morphism indicating the source and target, is $\t{L}{(M,N)}{L_1}$. In the notation, we shall not distinguish between a pair $(M,N)$ and its equivalence class (the corresponding morphism).
Morphisms with compatible source and target are composed as
%
$(M,N)\cdot(M_1,N_1) =(M_1 M ,N_1 N)$. 
For any object $L$, \textit{the identity morphism} is $1_L= (1,1)$.
So defined morphisms are called \textit{Darboux transformations}. 
\end{definition}

Every Darboux transformation $(M,N)$ defines a linear mapping of the  kernels, from $\ker L$ to $\ker L_1$, where a function $\phi$ is mapped to $M \phi$. The equivalence relation above then have a natural meaning:
since $(M+AL)\phi = M\phi$, the operators  $M$ and $M+AL$ give the same linear map on the kernel of $L$.

Note if $L$ and $M$ are known, then the coefficients of $L_1$ and $N$ are defined from a system of linear algebraic equations. Also not every operator $M$ can give a Darboux transformation for given operator $L$. As we shall see, intertwining relation~\eqref{intertw2} is, in a sense, an overdetermined system and $M$ must satisfy compatibility conditions.


In~\cite{shem:Darboux2} it was proved that the composition of morphisms is well defined (one has to check the compatibility with the equivalence relation).

Observe that given $NL = L_1M$, we have $\sigma(L) = \sigma(L_1)$ iff $\sigma(N) = \sigma(M)$.  
Note also that we have defined a category for the operators with fixed principal
symbol. 


We define the {\em order} of a Darboux transformation 
$(M,N)$ as the minimal possible order of a transformation in its equivalence class, that is the least possible order of an operator of the form $M+AL$. 

We also have the invertibility in terms
of equivalence classes, and say that the Darboux transformation $(M,N): L \rightarrow L_1$ is {\em invertible}
iff there exists $(M',N'): L_1 \rightarrow L$
such that the compositions 
$(M,N) \circ (M',N')$ and 
$(M',N') \circ (M,N)$ 
are identity morphisms (i.e. equivalent to identity morphism in the category).
By expanding the definition we obtain that
$(M,N) \colon L \rightarrow L_1$ and 
$(M',N') \colon L_1 \rightarrow L$
are mutual inverses if and only if there are
some operators $A,B$ so that the following hold:
\begin{align}
&{{M'M=1+AL}} \ , \label{inv:prop1} \\
&N'N=1+LA \ , \label{inv:nl} \\
&{{MM'=1+BL_1}} \ , \label{inv:prop2} \\ 
&NN'=1+L_1B \ .  \label{inv:nl1}
\end{align}

\begin{lemma} Invertible Darboux transformations induce isomorphisms on the kernels of the operators $L$ and $L_1$.
\end{lemma}

In particular, \eqref{inv:prop1} implies that $\ker {L} \, \cap \, \ker {M}=\{0\}$  is necessary for a Darboux transformation to be invertible.  Also note that the order of a Darboux transformation is not related in an obvious way to the order of its inverse.

The first steps in the general study of invertible Darboux transformations along with the analysis of one particular invertible class of Darboux transformations\,---\,the Darboux transformations of Type I\,---\,can be found in~\cite{2013:invertible:darboux,2015:inv:charts}.
 
\begin{lemma}  \label{lem:gn=ma} 
\begin{enumerate}
    \item Given the intertwining relations $NL = L_1 M$ and $N' L_1 = LM'$, the equalities for $N, N'$ follow from the equalities for $M, M'$.  Specifically, \eqref{inv:nl} follows from \eqref{inv:prop1}, and \eqref{inv:nl1} follows from \eqref{inv:prop2},
    \item If $(M,N)$ and $(M',N')$ are mutually inverse morphisms satisfying \eqref{inv:prop1} through \eqref{inv:nl1}, then $BN=MA$.
\end{enumerate}
\end{lemma}

Recall that a \textit{gauge transformation} of an operator $L$ is defined 
as $L^{g}= g^{-1} L g$, where $g \in \Ko$. Gauge transformations commute with Darboux transformations in the sense that if there is a Darboux transformation $(M,N): L \rightarrow L_1$, then there are also Darboux transformations 
\begin{align*}
    &(M^g,N^g): L^g \rightarrow L_1^g  \ , \\
    &(Mg,Ng): L^g \rightarrow L_1  \ , \\
    &(g^{-1}M,g^{-1}N): L \rightarrow L_1^g  \ .
\end{align*}

\begin{definition}\label{shift definition}
If a pair $(M,N)$ defines a Darboux transformation from $L$ to $L_1$, then the same 
pair defines a different Darboux transformation from $L+AM$ to $L_1+NA$ for every operator $A$.
We shall say that Darboux transformations $L \rightarrow L_1$ and $L+AM \rightarrow L_1+NA$ are related by a {\em shift}. Note that this is a symmetric relation.
\end{definition}

The shift of the Darboux transformation 
$(M,N) \colon L \rightarrow L_1$
is a Darboux transformation.  We have that 
$N(L+AM) = (L_1 + NA)M$ for the intertwining relation,
and $\sigma(L) = \sigma(L_1)$ implies
$\sigma(M) = \sigma(N)$, which implies $\sigma(L+AM) = \sigma(L_1 + NA)$.

In general, shifts of Darboux transformations do not commute with compositions of Darboux transformations. However, shifts are useful when dealing with invertible Darboux transformations. The following lemma was proved in~\cite{shem:Darboux2}.

\begin{lemma}\label{shift invertibility lemma}
Shifts of Darboux transformations preserve invertibility.
\end{lemma}

Darboux transformations and their variations are actively studied by physicists and mathematicians,
and there is a great variation in terminology and understanding of the term. E.g. one author's
``Laplace transformation'' can be called ``Darboux transformation'' by another author. Authors motivated by Physics
call Darboux transformations what we below name ``Wronskian type''. Ultimately all 
structures are interrelated and useful. Below we list several types of such transformations. All of them except 
Ganzha's Intertwining Laplace transformations are Darboux transformations as defined above.

\smallskip
\newcounter{Darboux transformationscount}
\stepcounter{Darboux transformationscount}
\noindent 
\textbf{\arabic{Darboux transformationscount}\quad\textit{Darboux transformations of Wronskian Type}}\\

These are the classical Darboux transformations. The operator $M$ is given by 
\[
M(f) = \frac{W(f_1, f_2, \dots , f_m,f)}{W(f_1, f_2, \dots , f_m)}
\]
where $W(f_1, f_2, \dots , f_m)$ denotes a Wronskian determinant with respect to one of the variables $t,\vars$ of $m$ linearly independent $f_j \in \Ko$,
which are 
elements of $\ker L$. 

For arbitrary operators on the line, it was proved by V. Matveev~\cite{Matveev79} that 
a composition of consecutive application of  $n$ Darboux transformations of this type of order $1$ (he calls them $1$-fold Darboux transformations) is given by a Wronskian formula of the corresponding 
larger size (more exactly, a  Wronskian of size $n+1$ in the numerator, and of size $n$ in the denominator; he calls them $n$-fold Darboux transformations).

Wronskian Type Darboux transformations are proved to be admitted by several different types of operators~\cite{Matveev79,matveev:1991:darboux,nimmo2010,2015:super}, and direct applications of these transformations for solving famous nonlinear equations are well known. Wronskian
type transformations are never invertible, since $\ker L \cap \ker M \neq \{0\}$. 
(An abstract framework for Wronskian type transformations was considered by Etingof-Gelfand-Retakh~\cite{etingof-gelfand-retakh:1997} and  Li-Nimmo~\cite{nimmo2010}. An analog of Wronskian Darboux transformations for super Sturm--Liouvlle operators was discovered in Liu and Liu-Ma\~nas~\cite{liu95_initial,liu_manas97_2,liu_manas97}.)

Note that in 1D case with intertwining relation having $M=N$, \eqref{intertw1}, similar Wronskian construction is used (same formula), but $f_i$
are arbitrary eigenfunctions with not necessarily zero eigenvalues. For the 1D case, every Darboux transformations is of Wronskian type in this extended sense, 
see in~\cite{adler-marikhin-shabat:2001}.
There is a recent very general result describing all operators admitting first-order Darboux transformations of Wronskian type~\cite{shemya:hobby2016:iterated}.

\begin{definition}
For a given variable $t$, call a differential operator 
{\em $t$-free} if it does not contain $\dee_t$
and none of its coefficients depend on $t$.
\end{definition}

\begin{theorem}[a criterion] 
\label{thm:wronski_type_char}
Given a linear partial differential operator $L$ and a $\psi \in \ker L$, the operator $M = \partial_t-\psi_t \psi^{-1}$ generates a Darboux
transformation if and only if there exists some operator $A$, and a $t$-free operator $B$, and non-zero function $c$ possibly depending on $t$ such that, for the gauge transformed operator,   
\[
L^{\psi} = A \partial_t + c B \ .
\]
Here either $B = 0$, or any one non-zero coefficient in $B$ may be taken to be $1$. If the DT exists, then $L_1$ and $N$ are given by 
$L_1^{\psi} =  L^{\psi} - A \partial_t +N^{\psi}A$, $N^\psi     =  \partial_t^{1/{c}}$, where $\dee_t^{1/c}=c \dee_t\circ  (1/c)$.
\end{theorem}

\begin{corollary}[a necessary condition]
\label{cor:1}
Suppose for some operator $L$, $\psi \in \ker L$, $M_\psi = \partial_t-\psi_t \psi^{-1}$ generates a Darboux transformation. Then there exists some operator $A$, and a $t$-free operator $B$ such that  \[
L = A M_\psi + c B \ , \quad [M_\psi,B]=0 \ .
\]
Here $B$ is not necessarily $t$-free.
\end{corollary}

\smallskip
\stepcounter{Darboux transformationscount}
\noindent 
\textbf{\arabic{Darboux transformationscount}\quad
\textit{Darboux transformations obtained from a factorization}}\\ 
Suppose $L=CM$ for some $C, M \in \K \setminus K$. 
Then for any operator $N$ with 
$\sigma(N) = \sigma(M)$, there is a Darboux transformation
$$(M,N) \colon CM \rightarrow NC \  , $$ since
$N(CM) = (NC)M$.  Since $\ker M \subseteq \ker L$,
these transformations are never invertible.

In the proofs, one can frequently see a trick where the  transformation operator $M$ can be (using some reasoning)
considered in a form where it is effectively an ordinary differential operator while everything else is multidimensional. In this
case if Darboux transformation is of factorization type (and with this particular $M$), then it is also of Wronskian type. This uses
the fact that every linear ordinary differential operator can be expressed by a Wronskian formula.

\smallskip
\stepcounter{Darboux transformationscount}
\noindent 
\textbf{\arabic{Darboux transformationscount}\quad
\textit{Laplace transformations}}\\
These are another type of Darboux transformations,
introduced in ~\cite{Darboux2}.
They are distinct from the Wronskian Type. They are only defined for 2D Schr\"odinger type operators, which have the form 
\begin{equation} \label{eq:L}
L = \partial_x\partial_y +a \partial_x + b\partial_y +c \, ,    
\end{equation}
where $a, b, c \in K$.
By interesting coincidence, the obstructions $h$ and $k$ to factorization of this operator are gauge (differential) invariants of operators of this form:
\begin{align}
  \o{L} &= (\partial_y + a)(\partial_x + b) - k \ , \label{eq:incomplet_fac_k} \\
        &= (\partial_x + b)(\partial_y + a) - h \ . \label{eq:incomplet_fac_h}
\end{align}

If the {\em Laplace invariant} $k = b_y + ab -c$ is nonzero, then $L$ admits a Darboux transformation with $M = \partial_x + b$ (in the  ``$x$-direction'').
(Explicit formulas for $L_1$ and $N$ are given below.) 
If the other Laplace invariant, $h = a_x + ab -c$
is nonzero, then $L$ admits a Darboux transformation with $M = \partial_y + a$ (in the  ``$y$-direction'').

Laplace transformations are invertible, and are (almost) inverses of each other. 
This has been mentioned in the literature, e.g.~\cite{ts:steklov_etc:00}. Classically, the invertibility of Laplace transformations was understood to mean that they induce isomorphisms of the kernels of the operators in question.
\comm{Interestingly, it is exactly the equivalence relation on Darboux transformations} that makes it possible to understand the invertibility of Laplace transformations in the precise algebraic sense. The detailed proof of the following (classical) statement can be found in~\cite{2013:invertible:darboux,shemya:hobby2016:iterated}.

\begin{theorem} 
\begin{enumerate}
    \item The composition of two consecutive Laplace transformations applied to $L$, first in $x$ direction, and then in $y$ direction is equal to the gauge transformation $L \rightarrow L^{1/k}$. 
If the transformation is first in $x$ direction, and then in $y$ direction then the composition is equal to the gauge transformation $L \rightarrow L^{1/h}$.
    \item 
The inverse for the Laplace transformation $L \rightarrow L_1$ given by the operator $M=\partial_x + b$ is $(M',N'): L_1 \rightarrow L$, where
$M'=-k^{-1} \left( \partial_y + a\right)$,  $N'=-k^{-1} \left( \partial_y + a- k_y k^{-1} \right)$. 
The inverse for the Laplace transformation $L \rightarrow L_1$ given by the operator $M=\partial_y + a$ is $(M',N'): L_1 \rightarrow L$, where
$M'=-h^{-1} \left( \partial_x + b\right)$,  $N'=-h^{-1} \left( \partial_x + b- h_y h^{-1} \right)$. 
\end{enumerate}
\end{theorem}

(Note that the formulas for the inverse for Laplace transformations can be generalized, as we do below, for Darboux transformations of Type I, which is a generalization of Laplace transformations to operators of a more general form.)

\smallskip
\stepcounter{Darboux transformationscount}
\noindent 
\textbf{\arabic{Darboux transformationscount}\quad
\textit{Ganzha's Intertwining Laplace transformations}}\\
These were introduced in~\cite{ganzha2013intertwining},
and generalize Laplace transformations to linear partial differential operators $L \in \K$ of very general form. 
One starts with any representation $L=X_1X_2-H$, where $L, X_1, X_2, H \in \K$. Then it was proved that there is a Darboux transformation for the operator $L$,
\[
(X_2,X_2+\omega): \ L \rightarrow L_1 \ ,
\]
where $L_1=X_2X_1+\omega X_1 - H$, and $\omega= -[X_2,H] H^{-1}$. The latter is a pseudodifferential operator
in the general case, an element of the skew Ore field over $\Ko$ that extends $\K$. 
In ~\cite{ganzha2013intertwining}, E. Ganzha then adds
the requirement that $\omega$ is a differential operator.
This is a very general class of transformations and contains both invertible and non-invertible Darboux transformations.

\smallskip
\stepcounter{Darboux transformationscount}
\noindent 
\textbf{\arabic{Darboux transformationscount}\quad 
\textit{Darboux transformations of Type I}}\\
These were introduced in~\cite{2015:inv:charts}, and  
are admitted by operators in $\K$ that can be written in the form $L=CM+f$, where $C,M \in \K$, and $f \in \Ko$. We have 
\[
(M, M^{1/f}): \ L \rightarrow L_1 \ ,
\]
where 
$L_1=M^{1/f}C+f$, writing $M^{1/f}$ for $f M (1/f)$. The inverse Darboux transformation always exists and is as follows:
\[
\left(-\frac{1}{f} \ C, - \ C \frac{1}{f}\right): \ L_1 \rightarrow L \ . 
\]
The original theory of Laplace transformations is very naturally
formulated in terms of differential invariants. In~\cite{2015:inv:charts}, 
analogous ideas were developed for Darboux transformations of Type I for operators of third order and in two independent variables. This can also be done
for operators of a general form using regularized moving frames~\cite{FO1,FO2} and ideas from~\cite{movingframes}.   
The classical Laplace transformations are a  special case of transformations of Type I. 
We will see that Darboux transformations of Type I can be identified with a subclass of the Intertwining Laplace transformations defined above. 

Note that the composition of two Darboux transformations of Type I is (in general) not of Type I.

It was recently proved~\cite{shemya:hobby2016:iterated}
that every first order Darboux transformation for every operator is either of Type I or of Wronskian type.

It is interesting to find out the exact relation between Intertwining Laplace transformations of Ganzha~\cite{ganzha2013intertwining} and  
Darboux transformations of Type I. 

\smallskip
\stepcounter{Darboux transformationscount}
\noindent 
\textbf{\arabic{Darboux transformationscount}\quad 
\textit{Continued Type I Darboux transformations}}

\noindent 
Continued Type I Darboux transformations were 
introduced recently in~\cite{shemya:hobby2016:iterated} as a further generalization of Darboux transformations of Type I.
They are present when $L$ can be divided by some $M$, and then $M$ divided by the ``remainder'', and so on, until we have a function.
All these transformations are invertible, and the explicit formulas for the inverses can be obtained by induction. 

\begin{definition}\label{iterated type definition}
Suppose that for operator $L$ and some operator $M=M_1$
there are nonzero
operators $A_1, A_2 \dots A_k, M_2, \dots M_k$ 
and a nonzero function $f = M_{k+1} \in K$
so that 
\begin{align*}
L  &= A_1 M_1 + M_{2} \\
M_1 & = A_2 M_2 + M_3 \\
 & \dots \\
M_{i-1} &= A_i M_i + M_{i+1} \ \mbox{, for} \quad 1 \leq i \leq k \ \mbox{, and finally} \\
& \dots \\
M_{k-1} & = A_k M_k + f 
\end{align*}

Then there exists a Darboux transformation for operator $L$
\[
(M,N): L \rightarrow L_1
\]
where $N_{k+1} = M_{k+1} = f$,
$N_{k} = fM_kf^{-1}$, $N = N_1$, $L_1 = N_0$ and  
$N_i$ for $0 \leq i \leq k-1$ can be found by
downward recursion using
\begin{equation*}
N_i = N_{i+1}A_{i+1} + N_{i+2} \ .
\end{equation*}
The corresponding Darboux transformation
is not obtained from a factorization, nor a multiple of a DT of Wronskian Type, and if $k > 1$ it is not of Type I.
We shall call these Darboux transformations {\em Continued Type I Darboux transformations}.
\end{definition}

%

\smallskip
\stepcounter{Darboux transformationscount}
\noindent 
\textbf{\arabic{Darboux transformationscount}\quad 
\textit{Athorne's generalized Laplace transformations}}

\noindent 

Ch.~Athorne~\cite{Athorne2018} has started recently an entirely original approach towards generalization of Darboux transformations (which he prefers to call Laplace transformations) for a particular but large class of operators of the form: 
\[
L= 
\partial_1\partial_2\partial_3+a_1\partial_2\partial_3+a_2\partial_1\partial_3+a_3\partial_1\partial_2+a_{12}\partial_3+a_{23}\partial_1+a_{13}\partial_2+a_{123} \ .
\]
The approach has potential to be generalized. A particular restriction is that the order of the operator is equal to the number of independent variables. 
A key role is played by differential invariants (relative to the gauge transformations of the operator).


Note also that in the Physics literature there are examples of Darboux transformations for concrete differential operators such as non-stationary Schr\"odinger 
operator or Fokker-Planck operator, see~\cite{Ioffe_Junker99}. 

Also analogues of Darboux transformations are used for discrete and semi-discrete cases, and also for operators on a space scale, see recent works of S.~Smirnov~\cite{Smirnov2018}, and G.~Hovhannisyan et al.~\cite{HOVHANNISYAN2018776, HOVHANNISYAN20161690}.



\section{The factorization problem for $L= \partial_x \partial_y +a \partial_x + b\partial_y +c$}
\label{sec:fac}
\subsection{Factorization of Darboux transformations}

Suppose there is a Darboux transformation    $L\to L_1$ defined by $(M,N)$. When can it be factorized into a composition of Darboux transformations $L\to L_0$ and $L_0\to L_1$?
Theoretically speaking, to obtain a factorization 
\begin{equation*}
    L\stackrel{(M_0,N_0)}{\longrightarrow}{L_0}\stackrel{(M',N')}{\longrightarrow}{L_1}
\end{equation*}
we need: \\
(1) $M=M'M_0$ (a factorization of $M$); \\
(2) $N=N'N_0$ (a factorization of $N$); \\
(3) so that $(M_0, N_0)$ defines a Darboux transformation $L\to L_0$; \\
(4) so that $(M', N')$ defines a Darboux transformation $L_0\to L_1$.

First of all notice a simple but useful observation: to establish a factorization of a Darboux transformation, it is enough to 
establish (1), (2), (3). 

\begin{theorem}\label{thm:div_darboux}
Statements (1), (2), (3) above imply (4).
\end{theorem}
\begin{proof}
The equality ${N} {L} = {L}_1 {M}$ can be re-written as follows ${N}' {N}_0 {L} = {L}_1 {M}' {M}_0$, which implies
${N}' {L}_0 {M}_0 = {L}_1 {M}' {M}_0$, and thus, $\left( {N}' {L}_0 - {L}_1 {M}' \right) {M}_0  = 0$.
Then the consideration of the principal symbols implies that ${N}' {L}_0 - {L}_1 {M}'=0$.
\end{proof}


\subsection{Reduction to the case where ${M}$ is an ordinary differential operator}
\label{sec:make_ordinary}

The special form of  the  operators $L= \partial_x \partial_y +a \partial_x + b\partial_y +c$ implies some special properties of Darboux transformations for them. Firstly, it makes it possible to find an operator ${A}$ so that ${M} +{A} {L}$
does not contain mixed derivatives. That means  that in every equivalence class we can choose
a \textit{standard representative} --- a pair $({M},{N})$
such that ${M}$ does not contain mixed derivatives. We   call it the \textit{projection of ${M}$ with respect to ${L}$}
and denote $\pi_{{L}}({M})$. Thus, for some $\alpha_i$, $\beta_j$, $m_0$
\begin{equation} \label{eq:M_no_mixed_derivatives}
\pi_{{L}}({M}) = \alpha_i \partial_x^i + \beta_j \partial_y^j + m_0 \  ,
\end{equation}
where we assume summation over the indices $i=1, \dots, d_1$ and   $j=1, \dots, d_2$.

Let ${M}$ be an operator of the form~\eqref{eq:M_no_mixed_derivatives}, where $\alpha_{d_1} \neq 0$ and
$\beta_{d_2} \neq 0$. Then we define the \textit{bi-degree} of ${M}$ as
\[
\deg({M})=(d_1,d_2) \ .
\]

If ${M}$ is an arbitrary operator in $\K$, then we define the \emph{bi-degree} of such operator with respect to
some given operator ${L}$ of the form~\eqref{op:L2} as the bi-degree of its projection
\[
\deg_{{L}}({M})=\deg(\pi_{{L}}({M})) \ .
\]

Consider a Darboux transformation of a 2D Schr\"odinger operator with $M$ of the form~\eqref{eq:M_no_mixed_derivatives}. It can be further reduced to an ordinary differential operator of order less than $d_1+d_2$.

\begin{theorem} \label{thm:compose_with_laplace}
Let ${L}$ be an operator of the form~\eqref{op:L2} and
${M}$ be an operator of the form~\eqref{eq:M_no_mixed_derivatives} and $\deg {M} =(d_1,d_2)$.
Then for operators ${M}_y = \partial_x + b$, ${M}_x = \partial_y + a$ defining two Laplace transformations, the following is true.

(1) $\deg_{{L}}({M} {M}_y) = (d_1+1,d_2-1)$, where $d_2 \neq 0$.

(2) $\deg_{{L}}( {M} {M}_x) = (d_1-1,d_2+1)$, where $d_1 \neq 0$,

(3) $\pi_{{L}}({M}_x {M}_y) = k$ ($d_2=0$ in this case), 

(4) $\pi_{{L}}({M}_y {M}_x) = h$ ($d_1=0$ in this case), 

\noindent where $h$ and $k$ are the Laplace invariants.
\end{theorem}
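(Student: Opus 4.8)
The plan is to treat parts (3) and (4) as immediate consequences of the incomplete factorizations, to deduce part (2) from part (1) by symmetry, and to reserve the genuine work for part (1). Throughout, the computation of $\pi_{\o{L}}$ rests on a single rewriting rule: reading $\o{L}\equiv 0$ modulo the left ideal $\K\o{L}$ gives $\partial_x\partial_y\equiv-a\partial_x-b\partial_y-c$, and more generally
\[
\partial_x^i\partial_y^j\equiv-\partial_x^{i-1}\partial_y^{j-1}(a\partial_x+b\partial_y+c)\pmod{\K\o{L}}\qquad(i,j\ge1).
\]
First I would record the elementary but essential fact that iterating this rule never raises the $\partial_x$-degree nor the $\partial_y$-degree of a monomial; hence $\pi_{\o{L}}$ is non-increasing in each separate degree, so that both components of the bidegree of a product can be read off from its highest pure monomials once all mixed monomials have been reduced.

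For parts (3) and (4) I would simply invoke \eqref{eq:incomplet_fac_k} and \eqref{eq:incomplet_fac_h}, which read $\o{M}_x\o{M}_y=\o{L}+k$ and $\o{M}_y\o{M}_x=\o{L}+h$. Since $\o{L}\equiv0$ modulo $\K\o{L}$ and $k,h$ are order-zero (hence free of mixed derivatives), one gets $\pi_{\o{L}}(\o{M}_x\o{M}_y)=k$ and $\pi_{\o{L}}(\o{M}_y\o{M}_x)=h$ at once. Part (2) then follows from part (1) under the involution $x\leftrightarrow y$, $a\leftrightarrow b$, which fixes $\o{L}$, interchanges $\o{M}_x\leftrightarrow\o{M}_y$ and $h\leftrightarrow k$, and swaps the two components of the bidegree; this turns the assertion ``$\deg_{\o{L}}(\o{M}\o{M}_y)=(d_1+1,d_2-1)$, $d_2\ne0$'' into exactly ``$\deg_{\o{L}}(\o{M}\o{M}_x)=(d_1-1,d_2+1)$, $d_1\ne0$''.

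For part (1), writing $\o{M}=\alpha_i\partial_x^i+\beta_j\partial_y^j+m_0$, the first component is easy: the only source of $\partial_x^{d_1+1}$ in $\o{M}\o{M}_y=\o{M}(\partial_x+b)$ is $\alpha_{d_1}\partial_x^{d_1}\!\cdot\!\partial_x$, which is pure and therefore untouched by projection, while every mixed monomial produced has $\partial_x$-degree $\le1$ and so cannot reach degree $d_1+1$. The second component is the heart of the matter. The only mixed monomials in $\o{M}\o{M}_y$ are the $\beta_j\partial_x\partial_y^j$, all of $\partial_x$-degree one, so I would analyze the pure-$\partial_y$ part $P_n$ of $\pi_{\o{L}}(\partial_x\partial_y^n)$ and set up the recursion it satisfies under the rewriting rule. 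Two bookkeeping facts then drive the result: the top coefficient of $P_n$ is $-b$, which cancels the $+\beta_{d_2}b$ coming directly from $\beta_{d_2}\partial_y^{d_2}\!\cdot\!b$ and kills the $\partial_y^{d_2}$ term; and the $\partial_y^{n-1}$ coefficient of $P_n$ works out to $k-nb_y$. Collecting the four contributions to the $\partial_y^{d_2-1}$ coefficient — from $\beta_{d_2}P_{d_2}$ and $\beta_{d_2-1}P_{d_2-1}$, and from the relevant terms in the expansions of $\beta_{d_2}\partial_y^{d_2}b$ and $\beta_{d_2-1}\partial_y^{d_2-1}b$ — the $b_y$- and $b$-terms cancel in pairs and leave exactly $\beta_{d_2}k$.

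The main obstacle is precisely this last computation: establishing the recursion for $P_n$ and the closed form $k-nb_y$ for its subleading coefficient, and then checking that the remaining terms cancel cleanly. The payoff is that the surviving coefficient is $\beta_{d_2}k$, which is nonzero because $\beta_{d_2}\ne0$ by the definition of bidegree and $k\ne0$ by the standing hypothesis that $\o{M}_y$ defines a Laplace transformation; this is what pins the $y$-bidegree at exactly $d_2-1$ rather than at something smaller, and it is the place where the Laplace condition $k\ne0$ is genuinely used.
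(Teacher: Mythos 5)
Your proposal is correct, and for parts (2)--(4) it is essentially the paper's own argument: the paper also obtains (3) and (4) by subtracting $1\cdot\o{L}$ using the incomplete factorizations \eqref{eq:incomplet_fac_k}--\eqref{eq:incomplet_fac_h}, and settles (2) ``analogously'' to (1). For part (1), however, you take a genuinely different and considerably more laborious route. The paper never reduces monomial by monomial: writing $\o{M}=\o{M}^x+\o{M}^y+m_0$ with $\o{M}^y=\sum_j\beta_j\partial_y^j$, it performs a single right division $\o{M}^y=\o{A}(\partial_y+a)+r$ and uses $\o{L}=(\partial_y+a)(\partial_x+b)-k$ to obtain $\o{M}\o{M}_y-\o{A}\o{L}=\left(\o{M}^x+m_0+r\right)(\partial_x+b)+\o{A}k$, which is already free of mixed derivatives; the first summand is an operator in $\partial_x$ alone of order $d_1+1$ with leading coefficient $\alpha_{d_1}$, and $\o{A}k$ is an operator in $\partial_y$ alone of order $d_2-1$ with leading coefficient $\beta_{d_2}k\neq0$. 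This produces the entire projection in one step, with the hypothesis $k\neq0$ entering exactly where you place it. Your rewriting-rule computation does reach the same answer --- the recursion for $P_n$ indeed gives top coefficient $-b$ and subleading coefficient $k-nb_y$ for $n\geq2$, and the four contributions to the $\partial_y^{d_2-1}$ coefficient do collapse to $\beta_{d_2}k$ --- but the step you yourself flag as ``the heart of the matter'' is precisely what the paper's division trick makes unnecessary, and in your write-up it is asserted rather than derived. Two small cautions if you carry your version out: first, you should justify that the mixed-derivative-free representative is unique (the principal symbol of any nonzero $\o{A}\o{L}$ consists only of mixed monomials), so that iterated reduction really computes $\pi_{\o{L}}$; second, the closed form $k-nb_y$ fails at $n=1$ (the constant term of $\pi_{\o{L}}(\partial_x\partial_y)$ is $-c$, not $k-b_y$), which is harmless because when $d_2=1$ only the vanishing of the $\partial_y^{d_2}$ coefficient is needed for the bi-degree, but that case split should be made explicit.
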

\begin{proof}
(1). Let us consider ${M}$ in the form ${M} = {M}^x + {M}^y + m_0$, where
$ {M}^x = \sum_{i=1}^{d_1} \alpha_i \partial_x^i$,  ${M}^y = \sum_{j=1}^{d_2} \beta_j \partial_y^j $,
and then consider first ${M}^y (\partial_x+b) - {A} {L}$ for some ${A} \in \K$. Using representation~\eqref{eq:incomplet_fac_k} of ${L}$, we see that
$ {M}^y (\partial_x+b) - {A} {L} =  {M}^y (\partial_x+b) - {A}  (\partial_y+a)(\partial_x+b) + {A} k
=  \left( {M}^y - {A}  (\partial_y+a) \right) (\partial_x+b) + {A} k$.
Choose ${A}$ so that ${M}^y = {A}  (\partial_y+a) + r$ for some $r \in K$. In this case ${A}$ is an ordinary differential operator in $\partial_y$ and
$\deg {A} = d_2-1$. Hence, ${M}^y (\partial_x+b) - {A} {L} =  r (\partial_x+b) + {A} k$, and
$  {M} (\partial_x+b) - {A} {L} = {M}^x (\partial_x+b) + m_0 (\partial_x+b) + r (\partial_x+b) + {A} k
= \left({M}^x + m_0+r \right)(\partial_x+b) + {A} k$.
As we use the Laplace transformation defined ${M}_y$, Laplace invariant $k$ is not zero, and thus the bi-degree of this operator is $(d_1+1,d_2-1)$.
Part (2) is similar.

(3). As $ {M}_x {M}_y - {A} {L} = (\partial_y+a)(\partial_x+b) - {A} {L} = (\partial_y+a)(\partial_x+b) - {A}  (\partial_y+a)(\partial_x+b) + {A} k
=  (1 - {A})(\partial_y+a)(\partial_x+b) + {A} k$, then if ${A}=1$, then ${M}_x {M}_y - {A} {L} = k$. Part (4) is similar.
\end{proof}

Applying one or several times Theorem~\ref{thm:compose_with_laplace}
we can completely eliminate either all the derivatives with $\partial_y$
or all the derivatives with $\partial_x$ in ${M}$, provided we can move far enough
along the chain of the Laplace invariants for the initial operator ${L}$. This can be formulated as
follows.

\begin{theorem} \label{thm:normalization_of_M}
Consider a Darboux transformation ${L} \xrightarrow{{M}} {L}_1$, where
${M}$ is an operator of the form~\eqref{eq:M_no_mixed_derivatives} and $\deg {M} =(d_1,d_2)$.
Then if the Laplace chain has length at least $d_2$ on the right, or
at least $d_1$ on the left, then there exists a sequence of Laplace transformations
\[
{{\widetilde{L}} \xrightarrow{{\widetilde{M}_1}} \dots \xrightarrow{{\widetilde{M}_t}} {L}} \ ,
\]
for some ${\widetilde{L}}$, so that there exists a Darboux transformation
${{\widetilde{L}} \xrightarrow{{\widetilde{M}}} {L}_1}$
with $\widetilde{M}$ depending on $\partial_x^i$ or $\partial_y^i$ only, and the diagram
\begin{displaymath}
    \xymatrix{
      {\widetilde{L}} \ar[rr]^{{\widetilde{M}_t} \dots {\widetilde{M}_1}}
                        \ar[dr]_{{{\widetilde{M}}} }
                                                                                  &         & {L}
                        \ar[dl]^{{M}}
                                                                                     \\
                                                                                  & {L}_1 &        \\
   }
\end{displaymath}
is commutative.
\end{theorem}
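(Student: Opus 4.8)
The plan is to prove the statement by induction, with Theorem~\ref{thm:compose_with_laplace} serving as the single inductive step. There are two symmetric cases: making $\o{\widetilde{M}}$ depend on $\partial_x$ only, which eliminates the $\partial_y$-derivatives using part~(1), and making it depend on $\partial_y$ only, which eliminates the $\partial_x$-derivatives using part~(2). I carry out the first; the second is obtained by interchanging the roles of $x$ and $y$, of $\o{M}_y$ and $\o{M}_x$, and of the right and left ends of the Laplace chain. The induction is on $d_2$, the number of $\partial_y$-derivatives in the standard representative of $\o{M}$.

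If $d_2=0$ there is nothing to prove: $\o{M}$ already has the required form, so I take $\o{\widetilde{L}}=\o{L}$, the empty chain, and $\o{\widetilde{M}}=\o{M}$. If $d_2\geq 1$, the hypothesis that the Laplace chain is long enough lets me prepend to $\o{M}$ one elementary Laplace transformation, giving a composite Darboux transformation whose operator is $\o{M}\,\o{M}_y$ with $\o{M}_y=\partial_x+b$. By part~(1) of Theorem~\ref{thm:compose_with_laplace} this composite has bi-degree $(d_1+1,d_2-1)$: one $\partial_y$-derivative has been traded for a $\partial_x$-derivative. The new Darboux transformation has the same target $\o{L}_1$ but a source one step along the chain, and its $\partial_y$-degree has dropped to $d_2-1$, while the portion of the chain still required has shortened by one step. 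The inductive hypothesis therefore applies and produces the remaining chain down to some $\o{\widetilde{L}}$ together with an operator $\o{\widetilde{M}}$ of bi-degree $(d_1+d_2,0)$, which depends on $\partial_x$ only. Commutativity of the triangle is then automatic: by construction $\o{\widetilde{M}}$ is the operator $\o{M}\,\o{\widetilde{M}_t}\cdots\o{\widetilde{M}_1}$ that represents the composite $\o{\widetilde{L}}\to\o{L}\to\o{L}_1$, and Lemma~\ref{lem:comp} guarantees that this composite is well defined on equivalence classes, so the pair attached to $\o{\widetilde{M}}$ really does make the diagram commute.

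The step I expect to be the main obstacle is the bookkeeping of the base operator across the iteration. Theorem~\ref{thm:compose_with_laplace} computes $\deg_{\o{L}}(\o{M}\,\o{M}_y)$ with respect to the very operator $\o{L}$ whose coefficient $b$ defines $\o{M}_y$, and it presupposes that $\o{M}$ is already in the mixed-derivative-free form~\eqref{eq:M_no_mixed_derivatives} relative to that same $\o{L}$. After one step the base changes to the new source, and I must check that the prepended first-order factor is genuinely the Laplace operator $\o{M}_y$ of that new base and that the composite, once reprojected with respect to the new base, is again of the form~\eqref{eq:M_no_mixed_derivatives} with the claimed bi-degree; this requires understanding how the coefficients $a,b$ and the invariants $h,k$ transform as one moves along the chain. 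Because every operator in the chain shares the principal symbol $\partial_x\partial_y$, reprojection only affects lower-order terms and leaves the leading bi-degree intact, so the reduction is consistent from step to step. The remaining ingredient is that each prepended Laplace transformation actually exists, i.e.\ that the relevant invariant ($k$ for the $\partial_x$-case, $h$ for the $\partial_y$-case) is nonzero at each stage; this is precisely guaranteed by the assumption that the chain has length at least $d_2$ on the right (respectively at least $d_1$ on the left), since such a chain is exactly a guarantee that that many successive Laplace transformations can be performed before an invariant vanishes and the chain terminates. Iterating the single step $d_2$ (respectively $d_1$) times then yields $\o{\widetilde{L}}$, the chain $\o{\widetilde{L}}\to\dots\to\o{L}$, and the one-variable operator $\o{\widetilde{M}}$, completing the induction.
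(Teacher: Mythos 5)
Your proposal is correct and is essentially the paper's own argument: the paper gives no separate proof of Theorem~\ref{thm:normalization_of_M}, deriving it instead from the remark that one applies Theorem~\ref{thm:compose_with_laplace} ``one or several times,'' which is exactly the induction on $d_2$ (resp.\ $d_1$) that you carry out, with the chain-length hypothesis supplying the nonvanishing Laplace invariant needed at each step. Your explicit attention to the change of base operator under reprojection and to Lemma~\ref{lem:comp} for the commutativity of the triangle only makes explicit what the paper leaves implicit.
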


\subsection{The case of factoring out a Wronskian type Darboux transformation}
\label{sec:yes}

In this section we analyze the case $\ker {M} \cap \ker {L} \neq \{0\}$. We show that in this case the Darboux transformation has a Wronskian type factor.


Recall first the following statement, which follows from the fact that ordinary differential operators over $K$ form a
Euclidean ring.

\begin{theorem}  \label{thm212} Let ${M}$ be an ordinary differential operator in $\partial_x$ over $K$,
and $\psi \in \ker {M}  \setminus \{0\}$. Then
${M}={M}_1 \cdot \left( \partial_x - \psi_{x} \psi^{-1} \right)$
for some ${M}_1 \in K[\partial_x]$.
\end{theorem}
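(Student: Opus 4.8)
The plan is to perform right Euclidean division of $\o{M}$ by the monic first-order operator $D := \partial_x - \psi_x \psi^{-1}$ and then to show that the remainder is forced to vanish because it annihilates the nonzero function $\psi$. The starting observation is that $\psi$ lies in the kernel of $D$ itself: one computes $D(\psi) = \psi_x - \psi_x \psi^{-1}\psi = \psi_x - \psi_x = 0$. So factoring $D$ out of $\o{M}$ is consistent with the hypothesis $\psi \in \ker \o{M}$, and the whole content is to show such a factorization actually exists.

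Next I would view $\o{M}$, which by hypothesis involves $\partial_x$ only, as an element of the Ore ring $\Kx$ of ordinary differential operators over the differential field $(K,\partial_x)$. Because $D$ is monic in $\partial_x$ (its leading coefficient is $1$), right division by $D$ is available without having to invert leading coefficients: repeatedly subtracting an appropriate left multiple $a\,\partial_x^{k} D$ of the divisor lowers the order, and iterating produces
\[
\o{M} = \o{M}_1 D + R, \qquad \ord R < 1 ,
\]
so that $R \in K$ is simply the operator of multiplication by some $r \in K$, and the quotient $\o{M}_1$ again involves $\partial_x$ only, hence lies in $\Kx \subset \K$. To finish, I would apply both sides of this identity to $\psi$. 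The left-hand side gives $\o{M}(\psi) = 0$ by hypothesis, while $\o{M}_1 D(\psi) = \o{M}_1\bigl(D(\psi)\bigr) = \o{M}_1(0) = 0$; therefore $r\psi = R(\psi) = 0$. Since $\psi \neq 0$ and $K$ is a field, this forces $r = 0$, whence $R = 0$ and $\o{M} = \o{M}_1 D$, as required.

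The only point requiring genuine care is the noncommutative division step. One must verify that subtracting the leading left multiple actually decreases the order: since $\partial_x^{k} D = \partial_x^{k+1} - \partial_x^{k}\!\circ(\psi_x\psi^{-1})$ has leading term $\partial_x^{k+1}$ with coefficient $1$, matching and cancelling the top-order term of the running dividend is always possible, and this is precisely where monicity of $D$ is used. Everything else is formal. Should one prefer to avoid invoking the division algorithm as a black box, the same argument can be packaged as an induction on $\ord \o{M}$: peel off the leading term $a_n\partial_x^n$ by subtracting $a_n\partial_x^{n-1} D$, apply the inductive hypothesis to the lower-order remainder (whose kernel still contains $\psi$ after the bookkeeping), and reassemble; monicity of $D$ again guarantees that the induction goes through. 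I do not expect any serious obstacle here, as $\Kx$ is a (left and right) Euclidean domain; the substantive step is the passage from the remainder $r\psi = 0$ to $r = 0$, which is immediate over a field.
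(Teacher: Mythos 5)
Your proof is correct. The paper itself gives no proof of this statement --- it is quoted as a known result with a citation to Khovanskii --- and your argument (right Euclidean division by the monic operator $\partial_x - \psi_x\psi^{-1}$ in the Ore ring $K[\partial_x]$, followed by evaluating the remainder on $\psi$ to force $r\psi=0$ and hence $r=0$ over the field $K$) is precisely the standard proof of this classical fact. Note also that your construction yields $\o{M}_1 \in K[\partial_x]$, which is slightly stronger than the stated conclusion $\o{M}_1 \in K[\partial_x,\partial_y]$, so nothing is lost.
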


We recall here the following well known statement which is going to be used later.
\begin{proposition}\label{thm_mult_f}
Let ${L} = \partial_x\partial_y +a \partial_x + b\partial_y +c$, where $a,b,c \in K$. Then for every $f(y) \in K$ depending only on $y$, ${L}(f(y) \psi) = f(y) {L}(\psi) + f'(y) {M}_y(\psi)$,
where ${M}_y=\partial_x+b$.
\end{proposition}

The next statement is probably new in a sense that it never appeared in a rigorous form with all the formulas, however the existence of ``Wronskian type'' Darboux transformations for operator of the form ${L} = \partial_x\partial_y +a \partial_x + b\partial_y +c$ is well known. The proof is rather technical and is given in the appendix.

\begin{theorem} \label{thm:psi} [Existence of Wronskian transformations for the second order hyperbolic operator]
Consider operator ${L} = \partial_x\partial_y +a \partial_x + b\partial_y +c$,  $a,b,c \in K$ and a non-zero element $\psi$ in its kernel.

Then for $L$ there is a (``Wronskian type'') Darboux transformation ${N}_\psi {L} = {L}_\psi {M}_\psi$ with 
\begin{equation} \label{eq:m_psi2}
 {M}_\psi =\partial_x - \psi_{x} \psi^{-1} \ ,
\end{equation}
and where ${N}_\psi$ is as follows. \\
\noindent Case I (general case). If $T=\psi_x  \psi^{-1}+b \neq 0$, then 
\begin{equation}
 N_\psi=\partial_x - \partial_x \ln (T \psi) 
\end{equation}
and the dressing formulas are 
\begin{equation}
L_\psi=\partial_x \partial_y + a_1 \partial_x + b_1 \partial_y + c_1
\end{equation}
where 
\begin{equation*}
a_1 = a \ , \quad  b_1=b- \partial_x \left(\ln T \right) \ , \quad c_1 =  a_x - a \partial_x \ln (T \psi ) -   T \partial_y (\ln \psi)
\end{equation*}

\noindent  Case II. If $T=\psi_x  \psi^{-1}+b \neq 0$, then 
\begin{equation}
 N_\psi=\partial_x - n(x,y) 
\end{equation}
(no constrains on $n(x,y)$) and operator $L$ can be factored 
as 
\begin{equation} \label{eq:fact_case_11}
L= \left(\partial_y + a \right) \left( \partial_x  + b \right) 
\end{equation}
and the dressing formulas are 
\begin{equation*}
a_1 = a \ , \quad  b_1=n \ , \quad c_1 =  na+ a_x \ .
\end{equation*}
\end{theorem}

%
%
%

We can now formulate and prove the first building block of our inductive proof.


\begin{theorem} \label{thm:main}
Consider operator ${L} = \partial_x\partial_y +a \partial_x + b\partial_y +c$ with Laplace invariant $k$ being non-zero and let there be a Darboux transformation 
\begin{equation} \label{eq:main:thm1}
  {N} {L} = {L}_1 {M} \ ,
\end{equation}
where 
${M}$ is an ordinary differential operator in $\partial_x$ of arbitrary order $d$ over $K$. Suppose the kernels of 
$L$ and $M$ has an non-trivial intersection:
\[
\psi \in \ker {L} \cap \ker {M} \ , \quad  \psi \neq 0 \ .
\]
Then this Darboux transformation can be represented as a consecutive application
of two Darboux transformations, the first one of which is generated
by ${M}_\psi =\partial_x - \psi_{x} \psi^{-1}$.
\end{theorem}
\begin{proof}
Applying gauge transformation, ${P} \to e^{-f} {P} e^f$ 
to all the operators in the~\eqref{eq:main:thm1}, we can reduce the problem to the case where the coefficient $b$ of the operator ${L}$ is zero.

For a nonzero $\psi \in \ker {L} \cap \ker {M}$, Theorems~\ref{thm212} and \ref{thm:psi} imply that there exist a Darboux transformation for ${L}$ generated by ${M}_\psi$. Let it be defined by equality ${N}_\psi {L} = {L}_\psi {M}_\psi$. Theorem~\ref{thm:div_darboux} implies that to prove the theorem it is enough to prove that ${N}$ is divisible by ${N}_{\psi}$ on the right, that is  ${N}={N}' {N}_{\psi}$ for some ${N}' \in \K$.

According to Theorem~\ref{thm:psi} two cases are possible, but in Case II invariant $k$ is zero, so we need to consider Case I only.
In this case $T=\psi_x  \psi^{-1}+b \neq 0$, and ${M}_\psi$ generates a unique Darboux transformation defined by ${N}_\psi = \partial_x - \partial_x \ln (T \psi)$. Recall that $b=0$ now, and therefore,
$T=\psi_x  \psi^{-1} \neq 0$, and so $T \psi = \psi_x$, and so
${N}_\psi = \partial_x - \partial_x \ln (\psi_x)$.

As it follows from Proposition~\ref{thm_mult_f}, to prove that $N$
is divisible by ${N}_\psi$ it is enough to prove that non-zero $\psi_x$ is in the kernel of $N$. 

We prove it using a small trick and Proposition~\ref{thm_mult_f}.
Indeed, consider the action of the both sides of operator equality~\eqref{eq:main:thm1} on $y \psi$. We have then ${N} {L} (y \psi) = {L}_1 {M} (y \psi)$ implying ${N} \left(y {L} (\psi) +  \psi_x \right) = {L}_1 \left( y {M} (\psi) \right)$,
and then ${N} \left(\psi_x \right) = 0$.
Applying Theorem~\ref{thm212} for ${N}$ and non-zero element of its kernel $\psi_x$,
we  have ${N}={N}' {N}_{\psi}$ for some ${N}' \in \K$.

\end{proof}

\subsection{The case of factoring out a Laplace transformation}
\label{sec:no}

Suppose now $\ker {M} \cap \ker {L} = \{0\}$.
Here we exclude from consideration   the case where ${L}$ possesses a factorization of the form ${L}=(\partial_y +a)(\partial_x +b)$.
In this case the corresponding Laplace transformation does not exist. Instead we can still factor out a Wronskian type Darboux transformation, as it was shown in~\cite{shem_2014_factorization_of_DT_for_factorizable}.

\begin{theorem} \label{thm:main2}
Let for an operator ${L} = \partial_x\partial_y +a \partial_x + b\partial_y +c$ its Laplace invariant $k$ be non-zero and let there be a Darboux transformation
$\t{{L}}{{M}}{{L}_1}$
generated by ${M}$  which is an ordinary differential operator in $\partial_x$ of arbitrary order $d$ over $K$,
and
\[
 \ker L \cap \ker M= \{ 0 \} \ .
\]
Then the Darboux transformation can be represented as a consecutive application of the Laplace transformations generated by
${M}_y=\partial_x+b$
and some other Darboux transformation of order one.
\end{theorem}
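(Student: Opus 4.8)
The plan is to realize the asserted factorization as an honest factorization of representatives in the equivalence class and then invoke Theorem~\ref{thm:div_darboux}. Write the given transformation as $\o{N}\o{L}=\o{L}_1\o{M}$ and, exactly as in the proof of Theorem~\ref{thm:main}, gauge $\o{L}$ so that $b=0$; then $\o{M}_y=\partial_x$ and $k=-c\neq0$, and the Laplace transformation generated by $\o{M}_y$ is an arrow $\t{\o{L}}{\o{M}_y}{\o{L}_y}$ determined by $\o{N}_y\o{L}=\o{L}_y\o{M}_y$ with $\o{N}_y=\partial_x-c_xc^{-1}$, so that $\ker\o{N}_y=\{c\,C(y)\}$. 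By Theorem~\ref{thm:div_darboux} it is enough to find a single operator $\o{A}\in\K$ for which $\o{M}+\o{A}\o{L}=\o{M}'\o{M}_y$ and simultaneously $\o{N}+\o{L}_1\o{A}=\o{N}'\o{N}_y$, both quotients of positive order; the pair $(\o{M}',\o{N}')$ is then the ``other'' transformation $\t{\o{L}_y}{\o{M}'}{\o{L}_1}$.

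The $\o{M}$-side is cheap. Dividing $\o{M}=\o{Q}\o{M}_y+R$ in $\Kx$ with $R=\o{M}(1)\in K$ and setting $\o{A}=R/k$, the relation $\o{L}=(\partial_y+a)\o{M}_y-k$ gives at once $\o{M}+\o{A}\o{L}=(\o{Q}+\o{A}(\partial_y+a))\o{M}_y=:\o{M}'\o{M}_y$, so here only $k\neq0$ is used. I would then put $\hat{\o{N}}:=\o{N}+\o{L}_1\o{A}$, which by construction satisfies $\hat{\o{N}}\o{L}=\o{L}_1\o{M}'\o{M}_y$. Note that $\hat{\o{N}}\notin\Kx$: the correction $\o{L}_1\o{A}$ reintroduces a mixed derivative into $\hat{\o{N}}$, which is what makes the second half of the argument delicate.

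The heart of the matter is the divisibility $\hat{\o{N}}=\o{N}'\o{N}_y$. The kernel inclusion comes for free: every $C(y)\in\ker\o{M}_y$ satisfies $\o{M}'\o{M}_y(C(y))=0$ and $\o{L}(C(y))=c\,C(y)$, so from $\hat{\o{N}}\o{L}=\o{L}_1\o{M}'\o{M}_y$ the operator $\hat{\o{N}}$ annihilates $\o{L}(\ker\o{M}_y)$; and the Laplace relation $\o{N}_y\o{L}=\o{L}_y\o{M}_y$ shows $\o{L}(\ker\o{M}_y)=\ker\o{N}_y$, whence $\ker\o{N}_y\subseteq\ker\hat{\o{N}}$. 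The hard part will be upgrading this inclusion to genuine right divisibility, since $\hat{\o{N}}\notin\Kx$ and Theorem~\ref{thm212} does not apply verbatim. I would instead carry out the right division $\hat{\o{N}}=\o{N}'\o{N}_y+S$ inside $\K$ with a remainder $S\in K[\partial_y]$ free of $\partial_x$; evaluating on $\ker\o{N}_y$ forces $S(c\,C(y))=0$ for every $C(y)$, and a leading-coefficient induction in the $\partial_y$-order of $S$ (using $c\neq0$) then yields $S=0$. This reduction of divisibility to a $\partial_y$-remainder computation is the step I expect to require the most care.

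It remains to dispose of the endpoint $d=1$ separately: by the classification of order-one transformations in~\cite{Laplace_only_degenerate_2012}, the hypothesis $\ker\o{L}\cap\ker\o{M}=\{0\}$ excludes the Wronskian type and forces $\o{M}$ to equal $\o{M}_y$ up to equivalence, so the transformation is already a single Laplace transformation with nothing left to factor. For $d\geq2$ the quotients $\o{M}'$ and $\o{N}'$ have order $d-1>0$, Theorem~\ref{thm:div_darboux} applies, and the decomposition follows. I would emphasize that the only nontrivial input to the divisibility argument is $k\neq0$; the hypothesis $\ker\o{L}\cap\ker\o{M}=\{0\}$ functions to place us in the case complementary to Theorem~\ref{thm:main} and to settle this order-one endpoint.
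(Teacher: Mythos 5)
Your computation is essentially sound as a verification of the \emph{literal} statement, but it proves something strictly weaker than what the paper's proof establishes, and the difference sits exactly where the hypothesis $\ker\o{L}\cap\ker\o{M}=\{0\}$ lives. Because you change the representative by $\o{A}\o{L}$ before factoring, the divisibility you obtain, $\o{M}+\o{A}\o{L}=\o{M}'\o{M}_y$, holds for \emph{every} Darboux transformation with $k\neq0$ (as you yourself note), and the quotient $\o{M}'=\o{Q}+\tfrac{R}{k}(\partial_y+a)$ has bi-degree $(d-1,1)$. What you have built is in essence the tautological factorization ``$D=(D\circ T^{-1})\circ T$'' that exists because the Laplace transformation $T$ is invertible whenever $k\neq 0$. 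The paper proves the honest right-divisibility $\o{M}=\o{M}'\o{M}_y$ with $\o{M}'\in\Kx$ of order $d-1$, equivalently $\ker\o{M}\cap\ker\o{M}_y\neq\{0\}$, and \emph{that} genuinely requires the kernel hypothesis: for $\o{L}=\partial_x\partial_y-1$ and the Wronskian operator $\o{M}=\partial_x-1$ coming from $\psi=e^{x+y}\in\ker\o{L}\cap\ker\o{M}$, one has $\o{M}(g(y))=-g(y)\neq0$ for $g\neq 0$, so no factorization $\o{M}=\o{M}'\o{M}_y$ in $\Kx$ exists. The paper's argument is correspondingly heavier: Lemma~\ref{lem:N_form} shows $\o{N}\in\Kx$, Theorem~\ref{thm212} reduces everything to exhibiting a nonzero element of $\ker\o{M}\cap\ker\o{M}_y$, and a contradiction is extracted from the assumed triviality of that intersection by comparing the two maps $\o{L},\o{M}_y:\ker\o{M}\to\ker\o{N}$ over $C(y)$ (Lemmas~\ref{lem:map} and~\ref{lem:basis}) and splitting on whether the connecting matrix $A$ is degenerate --- each horn contradicting one of the two kernel-triviality assumptions.

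Why this matters: the statement of Theorem~\ref{thm:main2} does not literally demand that the second factor be simpler, so your proof does deliver the stated conclusion (and your $\o{N}$-side step --- right division by $\o{N}_y$ with remainder $S\in K[\partial_y]$, killed on $c\,C(y)$ by a leading-coefficient induction using $c\neq0$ and the independence of $1,y,\dots,y^t$ over the $\partial_y$-constants --- does go through). But Theorem~\ref{thm:main2} is consumed by Theorem~\ref{thm:main_very} as the inductive step that lowers the order of $\o{M}$ from $d$ to $d-1$ while keeping it an ordinary operator in $\partial_x$. With your quotient of bi-degree $(d-1,1)$, re-normalizing via Theorem~\ref{thm:normalization_of_M} sends you back to bi-degree $(d,0)$, and the induction goes in circles. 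So while you have verified the letter of the theorem by a much shorter route, you have bypassed its content, and the main result of the paper cannot be recovered from your version; the hypothesis $\ker\o{L}\cap\ker\o{M}=\{0\}$ is not merely a case label complementary to Theorem~\ref{thm:main}, it is what makes the order-reducing factorization possible.
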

The rest of the section is devoted to a proof of this theorem.

Let the given Darboux transformation be defined by the equality
\begin{equation} \label{eq:main2}
 {N} \cdot {L} = {L}_1 \cdot {M} \  .
\end{equation}
The equality implies that $N$ is an ordinary differential operator,
${N}= n_i \partial_x^i$, for some  $n_i \in K$, $i=0, \dots, d$.
Since $k \neq 0$, then there exists the Laplace transformation generated by ${M}_y=\partial_x + b$:
\begin{equation} \label{eq:lapl}
  {N}_y {L} = {L}_{-1} {M}_y \  .
\end{equation}
{Our goal} is to factor out this transformation from the initial Darboux transformation.
By Theorem~\ref{thm:div_darboux} this is true if there are corresponding factorization of
${M}$ and ${N}$: ${M} = {M}' {M}_y$, and ${N} =  {N}' {N}_y$ for some ${M'}, {N'}$.
However, here the factorization for $N$ follows from the factorization of $M$. Indeed,  ${M} = {M}' {M}_y$ implies that there exists
a nonzero
\[
  \widetilde{\psi} \in \ker M_y \setminus \ker L \, .
\]
Equality~\eqref{eq:lapl} implies that ${L}(\widetilde{\psi}) \in \ker {N}_y$,
while~\eqref{eq:main2} implies ${L}(\widetilde{\psi}) \in \ker {N}$.
Therefore, by Theorem~\ref{thm212}, $N$ is divisible by $N_y$.
By Theorem~\ref{thm212}, to show ${M} = {M}' {M}_y$ is the same as to show that there is some non-zero element in
the intersection of the kernels of ${M}$ and ${M}_y$.

{In the rest of the proof} we assume the opposite,
\[
 \ker {M} \cap \ker {M}_y = \{ 0 \}
\]
and show that it leads us to a contradiction.

The main intertwining relation~\eqref{eq:main2} implies a mapping
\[
 L : \ \ker M \rightarrow \ker N \, .
\]
Moreover, the equality ${L}(f(y) \psi) - f(y) {L}( \psi) = f'(y){M}_y(\psi)$ (proved in Proposition~\ref{thm_mult_f})
implies a $C(y)$-linear (by construction) mapping
\begin{equation} \label{ker:y}
M_y : \ \ker {M} \rightarrow \ker {N} \  .
\end{equation}
With an abuse of notation we denote these mappings ${L}$ and ${M}_y$ by the same symbols as the corresponding operators.

\begin{lemma} \label{lem:map}
The map $M_y$ in \eqref{ker:y} is an isomorphism of vector spaces over $C(y)$.
\end{lemma}
\begin{proof}
%
Show that ${M}_y$ sends bases to bases. Let $\{e_i \}$, $i=1, \dots , d$, be a $C(y)$-basis for $\ker {M}$.
Suppose there is a non-trivial linear relation $f^i {M}_y(e_i)=0$, where $f_i \in C(y)$. Then by linearity,
$0=f^i {M}_y(e_i) = {M}_y \left(f^i e_i \right)$.
Since the kernel of the mapping is trivial, we have $f^i e_i= 0$, which contradicts $\{e_i \}$ being a basis.
Therefore, ${M}_y$ is a monomorphism, and since $\ker {M}$ and $\ker {N}$ are of the same dimension, we conclude that
it is an isomorphism.
\end{proof}

\begin{lemma} \label{lem:basis}
One can choose a $C(y)$-basis $\{ e_i \}$, $i=1, \dots, d$, for $\ker {M}$, so that $u_i={L}(e_i)$, $i=1, \dots ,d$, form a $C(y)$-basis for $\ker {N}$.
\end{lemma}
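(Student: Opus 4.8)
The plan is to reduce the statement to a single linear-algebra fact about the two maps $\o{L}$ and $\o{M}_y$ from $\ker\o{M}$ to $\ker\o{N}$. The point is that $\o{M}_y$ is a $\Cy$-linear isomorphism (Lemma~\ref{lem:map}), whereas $\o{L}$ fails to be $\Cy$-linear only through the controlled Leibniz term of Theorem~\ref{thm_mult_f}, namely $\o{L}(f(y)\psi)=f(y)\o{L}(\psi)+f'(y)\o{M}_y(\psi)$. I want to turn this one extra term into exactly the freedom needed to correct a possibly degenerate situation.

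First I would fix an arbitrary $\Cy$-basis $\{b_1,\dots,b_d\}$ of $\ker\o{M}$. Since $\o{M}_y$ is an isomorphism, the images $\{\o{M}_y(b_1),\dots,\o{M}_y(b_d)\}$ form a $\Cy$-basis of $\ker\o{N}$. Because $\o{L}(b_i)\in\ker\o{N}$ (indeed $\o{N}\o{L}(b_i)=\o{L}_1\o{M}(b_i)=0$), I may write $\o{L}(b_i)=\sum_j A_{ji}\o{M}_y(b_j)$ for a uniquely determined matrix $A=(A_{ij})$ with entries in $\Cy$; this matrix encodes the whole content of $\o{L}$ relative to $\o{M}_y$. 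Rather than allow a general change of basis, I would then look for a basis of the special shape $e_i=\phi\, b_i$, where $\phi=\phi(y)\in K$ is a single nonzero function of $y$ to be chosen (so $\{e_i\}$ is automatically again a basis). Applying the Leibniz formula gives $\o{L}(e_i)=\sum_j(\phi A_{ji}+\phi'\delta_{ji})\o{M}_y(b_j)$, so that $\{u_i=\o{L}(e_i)\}$ is a $\Cy$-basis of $\ker\o{N}$ precisely when the matrix $\phi A+\phi' I$ is invertible over $\Cy$, i.e. when $\det(\phi A+\phi' I)=\phi^{d}\det(A+(\phi'/\phi)I)\neq 0$.

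The heart of the argument, and the only real obstacle, is that $A$ itself need not be invertible: the hypothesis $\ker\o{L}\cap\ker\o{M}=\{0\}$ does not by itself force $\det A\neq 0$, so the naive choice $\phi=\mathrm{const}$ can fail and the Leibniz term $\phi'$ is exactly what rescues the construction. To finish I would regard $P(\Lambda)=\det(A+\Lambda I)\in\Cy[\Lambda]$, which is monic of degree $d$, hence a nonzero polynomial with at most $d$ roots in the field $\Cy$. The logarithmic derivatives $\phi'/\phi$ realizable by nonzero $\phi\in K$ include the infinitely many distinct rational functions $n/(y-\gamma)$ (for $n\in\mathbb{Z}_{>0}$ and constants $\gamma$), so at least one value $\lambda=\phi'/\phi$ avoids every root of $P$. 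For that choice of $\phi$ one has $P(\lambda)\neq 0$, hence $\det(\phi A+\phi' I)=\phi^{d}P(\lambda)\neq 0$, and $\{e_i=\phi\, b_i\}$ is the desired basis with $\{\o{L}(e_i)\}$ a $\Cy$-basis of $\ker\o{N}$.

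I expect the delicate points to be purely formal: setting up the matrix $A$ correctly via the isomorphism $\o{M}_y$, and checking that enough logarithmic derivatives are genuinely available in $\Cy$ to dodge the finitely many roots of $P$. Everything else is the direct computation of $\o{L}(\phi\, b_i)$ from Theorem~\ref{thm_mult_f}.
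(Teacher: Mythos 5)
Your proposal is correct, but it takes a genuinely different route from the paper. The paper builds the basis inductively: it extends a partial family $e_1,\dots,e_k$ with $\o{L}(e_1),\dots,\o{L}(e_k)$ independent, and when the next image $\o{L}(e_{k+1})$ degenerates it repairs the choice by the substitution $e_{k+1}\mapsto e_{k+1}+y\,e_1$, using the Leibniz term of Theorem~\ref{thm_mult_f} together with the independence of some $\o{M}_y(e_i)$ supplied by Lemma~\ref{lem:map}. You globalize the same two ingredients: you encode $\o{L}$ relative to the isomorphism $\o{M}_y$ by a single matrix $A$ over $\Cy$ and rescale the whole basis by one function $\phi(y)$, so that success is equivalent to $\det\left(A+(\phi'/\phi)I\right)\neq 0$; since this is a monic degree-$d$ polynomial in $\phi'/\phi$ over the field $\Cy$, it has at most $d$ roots there, and the infinitely many available logarithmic derivatives (e.g.\ $\phi=y^n$ with $\phi'/\phi=n/y$, all distinct in characteristic zero and lying in $\Cy$, so that $\phi b_i$ stays in $\ker\o{M}$) let you avoid them. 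Both arguments are sound. Yours is shorter to verify, makes completely explicit that the Leibniz correction acts as the shift $A\mapsto A+\Lambda I$, and --- worth noting --- never invokes the hypothesis $\ker\o{L}\cap\ker\o{M}=\{0\}$, which the paper's base case does use (to get $\o{L}(e_1)\neq 0$), so you in fact prove a slightly stronger statement. The paper's induction is more elementary, requiring nothing of $\Cy$ beyond containing $y$, and produces the basis by explicit successive corrections in keeping with the paper's constructive emphasis; your matrix $A$ essentially anticipates the transition matrix that the paper introduces immediately after the lemma in diagram~\eqref{eq:dia}.
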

\begin{proof} We construct such $\{ e_i \}$, $i=1, \dots, d$, by induction.

Choose arbitrarily a non-zero $e_1 \in \ker {M}$. Since the kernel of the mapping ${L}$ is trivial, then $u_1={L}(e_i)$
is non-zero too, and the vector space $\langle u_1 \rangle$ over $C(y)$ is one-dimensional.
This is the basis of induction.

Assume that there are $k$, $k < d$, linearly independent over $C(y)$ elements $e_i$ in $\ker {M}$, such that $u_i={L}(e_i)$, $i=1, \dots ,d$, are linearly independent
over $C(y)$ in $\ker {N}$.

Choose arbitrarily a non-zero $e_{k+1}  \in \ker {M}$ that is linearly independent (over $C(y)$)
with $e_i$, $i=1, \dots, k$.
If $u_{k+1}={L}(e_{k+1})$ is linearly independent with all $u_i$, $i=1, \dots, k$ (over $C(y)$), the step of induction is proved.

Suppose now that $u_{k+1}={L}(e_{k+1})$ is not linearly independent with  $\{ u_i \}$, $i=1, \dots, k$ (over $C(y)$).
Consider then $\{ v_{i} \}$, where $v_i ={M}_y(e_{i})$, $i=1, \dots, k+1$. By Lemma~\ref{lem:map}
they are linearly independent and span a vector subspace of dimension $k+1$ in $\ker {N}$.
On the other hand, $\{u_{i} \}$, $i=1, \dots, k$, are linearly independent too,
and span a vector subspace of dimension $k$ in $\ker {N}$.
Thus, there is at least one $v_i$ that is linearly independent with  $\{ u_i \}$, $i=1, \dots, k$ (over $C(y)$).
Without loss of generality, we can assume that this $v_i$ is $v_1$.

Consider $\widetilde{e}_{k+1}= e_{k+1} + y e_{1}$. Using Proposition~\ref{thm_mult_f} we compute
 $\widetilde{u}_{k+1} =  {L}\left(\widetilde{e}_{k+1} \right) =   {L}\left(e_{k+1}\right) + y {L} (e_1) + y' {M}_y(e_1)  =    u_{k+1} + y u_{1} + v_{1}$.
 Thus, vector space $\langle u_1, \dots, u_k, \widetilde{u}_{k+1} \rangle$ over $C(y)$ is $(k+1)$-dimensional.

Since the dimensions of the vector spaces (over $C(y)$) $\ker {M}$ and $\ker {N}$ are the
same, this process can be continued until the desired basis is constructed.
\end{proof}

Using Lemma~\ref{lem:basis} we can choose $\{ e_i \} \ , i=1, \dots, d$, a basis over $C(y)$ for $\ker {M}$, so that
$\{ u_i \}$, where $u_i={L}(e_i)$ and $\{v_i\}$, where $v_i={M}_y(e_i)$, $i=1, \dots ,d$,
are two bases of $\ker {N}$.

We define a $C(y)$-linear operator ${A}$ by its action on its basis elements: ${A}(u_i) = v_i$, $i=1, \dots ,d$, that is
\begin{equation} \label{eq:dia}
    \xymatrix{
      \ker {M} \ni e_i\ar[rr]^{{L}}      \ar[drr]_{{{M_y}} }         &         & u_i \in \ker {N}  \ar[d]^{{A}}           \\
                                                                    & & v_i \in \ker {N} \ar[llu]       \\
   }
\end{equation}
and the diagram is commutative. Below we show that this cannot be true, and by this we arrive to a contradiction with our assumption
that $\ker {M} \cap \ker {M}_y = \{ 0 \}$.

If  $A$ is not degenerate, let $F = (f^1, \dots, f^d)$ be a row of $f^i \in {C}(y)$. By Proposition~\ref{thm_mult_f}
${L}(f^i e_i) =  f^i {L}(e_i) + (f^i)' {M}_y (e_i) =  f^i u_i + (f^i)' v_i =  f^i u_i + (f^i)' A(u_i) =  (F +  F'A) U$,
where $U$ is a column $(u_1, \dots, u_d)^t$.
Since $A$ is not degenerate, then the system of linear ordinary differential equations
$F'A = -F$ has a non zero solution $\widetilde{F}=(\widetilde{f}^1, \dots , \widetilde{f}^d)$.
Thus ${L}(\widetilde{f}^i e_i)=0$ and since $\{ e_i \}$ is a basis for $\ker {M}$, $\widetilde{f}^i e_i$ belongs to the intersection
of the kernels of ${L}$ and ${M}$, which is trivial by the condition of the theorem. Thus,
$\widetilde{f}^i e_i = 0$, and since this is a non-degenerate linear combination of the basis vectors, we come to a contradiction.

If the matrix $A$ is degenerate, then
for some non-zero $\widetilde{u} \in \ker {N}$,
$A \widetilde{u} =0$.
Consider
its pre-image, a non-zero $\widetilde{e}$ such that ${L}(\widetilde{e})=\widetilde{u}$.
Since diagram~\eqref{eq:dia} is commutative, ${M}_y \left( \widetilde{e} \right)  = 0$, which contradicts with the assumption that
the intersection of the kernels of ${M}$ and ${M}_y$ is trivial.

This concludes the proof of Theorem~\ref{thm:main2}.

\subsection{The general statement}
\label{sec:main}
Theorems~\ref{thm:main} and~\ref{thm:main2}
and the result of paper~\cite{shem_2014_factorization_of_DT_for_factorizable} describing a particular case when the operator ${L}$ is factorizable (see more details in Sec.~\ref{sec:no}),
imply together the following theorem.

\begin{theorem} \label{thm:main_very}
Consider operator
\begin{equation} \label{eq:opL2_6}
{L} = \partial_x\partial_y +a \partial_x + b\partial_y +c \  ,
\end{equation}
where $a,b,c \in K$.
Suppose a Darboux transformation of ${L}$ is defined by a pair $({M}, {N})$, where the operator ${M}$ is of bi-degree $(d_1,d_2)$.
If the Laplace chain of ${L}$ has length at least $d_2$ on the right, or
at least $d_1$ on the left, then the Darboux transformation can be represented as the composition of first-order elementary Darboux transformations.
\end{theorem}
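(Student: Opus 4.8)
The plan is to assemble the final theorem from the machinery already developed, treating it as a bookkeeping argument that tracks how the bi-degree of $\o{M}$ decreases under successive factorizations. First I would invoke Theorem~\ref{thm:normalization_of_M}: given a Darboux transformation $\o{L} \xrightarrow{\o{M}} \o{L}_1$ with $\deg \o{M} = (d_1,d_2)$, provided the Laplace chain is long enough (at least $d_2$ on the right or $d_1$ on the left), we can precompose with a sequence of Laplace transformations to obtain a matching Darboux transformation $\o{\widetilde{L}} \xrightarrow{\o{\widetilde{M}}} \o{L}_1$ in which $\o{\widetilde{M}}$ is an \emph{ordinary} differential operator in a single variable, say $\partial_x$. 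Since each Laplace transformation is itself a first-order elementary Darboux transformation, this reduction step costs us only a prefix of elementary factors, and it remains to factor the transformation generated by the ordinary operator $\o{\widetilde{M}}$.

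The core of the argument is an induction on the order $d$ of the ordinary operator $\o{\widetilde{M}}$ (equivalently on $d_1+d_2$ for the original $\o{M}$). At each stage I would split into two exhaustive cases according to whether $\ker \o{\widetilde{L}} \cap \ker \o{\widetilde{M}}$ is trivial. If this intersection is nonzero, Theorem~\ref{thm:main} lets me split off a Wronskian type first-order Darboux transformation generated by $\o{M}_\psi = \partial_x - \psi_x \psi^{-1}$ for a shared kernel element $\psi$; by Theorem~\ref{thm:div_darboux} the remaining factor is again a genuine Darboux transformation, now generated by an ordinary operator of strictly smaller order. If the intersection is trivial (and the relevant Laplace invariant $k \neq 0$, i.e.\ $\o{\widetilde{L}}$ is not factorizable), Theorem~\ref{thm:main2} lets me split off the Laplace transformation generated by $\o{M}_y = \partial_x + b$, again reducing the order by one via Theorem~\ref{thm:div_darboux}. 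The residual case where $k = 0$, so that $\o{\widetilde{L}}$ is factorizable, is covered by the cited result of~\cite{shem_2014_factorization_of_DT_for_factorizable}, which again splits off a first-order factor. In every case the remaining Darboux transformation is generated by an ordinary operator of order $d-1$, so the induction hypothesis applies and completes the factorization into first-order elementary pieces.

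The main obstacle I anticipate is verifying that the inductive hypotheses genuinely remain available after each reduction, rather than the individual splitting steps themselves. Specifically, after splitting off one elementary factor I obtain an intermediate operator $\o{L}_0$ (or $\o{L}_{-1}$), and I must confirm that the \emph{new} Darboux transformation $\o{L}_0 \to \o{L}_1$ still satisfies the hypotheses of Theorems~\ref{thm:main},~\ref{thm:main2}, or the factorizable-case result: that its Laplace invariant is still nonzero where required, and that its Laplace chain is still long enough to have permitted the earlier normalization. The cleanest way to handle this is to perform the normalization of Theorem~\ref{thm:normalization_of_M} \emph{once} at the outset, reducing to an ordinary operator, and then carry out the entire induction at the level of ordinary operators in $\partial_x$, where the order drops by exactly one at each step and the residual generator stays an ordinary operator in $\partial_x$ by construction. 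The length hypothesis on the Laplace chain is precisely what guarantees the initial normalization succeeds, and once we are in the ordinary-operator setting the remaining splittings do not require moving further along the chain.

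Finally I would observe that reassembling the factors in order yields the original transformation as a composition of first-order elementary Darboux transformations (Laplace transformations and Wronskian type transformations), the composition being well-defined by Lemma~\ref{lem:comp}; this gives exactly the claimed conclusion of Theorem~\ref{thm:main_very}.
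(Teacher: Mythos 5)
Your proposal is correct and follows essentially the same route as the paper, which simply asserts that Theorems~\ref{thm:main} and~\ref{thm:main2} together with the factorizable-case result of~\cite{shem_2014_factorization_of_DT_for_factorizable} imply the statement; you have merely made explicit the intended skeleton (normalize once via Theorem~\ref{thm:normalization_of_M} at the cost of a prefix of Laplace factors, then induct on the order of the resulting ordinary operator using the exhaustive case split on $\ker\o{L}\cap\ker\o{M}$ and on $k$). Your observation that the Laplace-chain length hypothesis is consumed only by the initial normalization, while the subsequent splittings need only $k\neq 0$ or fall into the factorizable case, is exactly the point that makes the induction close.
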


\section*{Acknowledgements} This work was finished while supported by the National Science Foundation under Grant No.1708033.

\section*{Appendix} For the completeness of the exposition we give here the proof that due to its technical nature was skipped in the main part of the text.
Although the theorem may be known in general, it had never appeared in this complete form with all the formulas in the literature.

\begin{theorem} [Existence of Wronskian transformations for the Laplace operator]
Consider operator ${L} = \partial_x\partial_y +a \partial_x + b\partial_y +c$,  $a,b,c \in K$ and a non-zero element $\psi$ in its kernel.

Then for $L$ there is a (``Wronskian type'') Darboux transformation ${N}_\psi {L} = {L}_\psi {M}_\psi$ with 
\begin{equation} \label{eq:m_psi1}
 {M}_\psi =\partial_x - \psi_{x} \psi^{-1} \ ,
\end{equation}
and where ${N}_\psi$ is as follows. \\
\noindent Case I (general case). If $T=\psi_x  \psi^{-1}+b \neq 0$, then 
\begin{equation}
 N_\psi=\partial_x - \partial_x \ln (T \psi) 
\end{equation}
and the dressing formulas are 
\begin{equation}
L_\psi=\partial_x \partial_y + a_1 \partial_x + b_1 \partial_y + c_1
\end{equation}
where 
\begin{equation*}
a_1 = a \ , \quad  b_1=b- \partial_x \left(\ln T \right) \ , \quad c_1 =  a_x - a \partial_x \ln (T \psi ) -   T \partial_y (\ln \psi)
\end{equation*}

\noindent  Case II. If $T=\psi_x  \psi^{-1}+b \neq 0$, then 
\begin{equation}
 N_\psi=\partial_x - n 
\end{equation}
(no constrains on $n$) and operator $L$ can be factored 
as 
\begin{equation} \label{eq:fact_case_1}
L= \left(\partial_y + a \right) \left( \partial_x  + b \right)
\end{equation}
and the dressing formulas are 
\begin{equation*}
a_1 = a \ , \quad  b_1=n \ , \quad c_1 =  na+ a_x \ .
\end{equation*}
\end{theorem}

\begin{proof} Let $N=\partial_x+n$,  $L_1= \partial_x\partial_y +a_1 \partial_x + b_1 \partial_y +c_1$,  $n, a_1,b_1,c_1 \in K$.

From the fact that $\psi$ is in the kernel of $L$, obtain $c$ as
\begin{equation*}
  c = -\left(a \psi_x+b \psi_y+\psi_{xy} \right) \psi^{-1}
\end{equation*}
Then comparing the coefficients at $\partial_x^2$ and then at $\partial_x \partial_y$, and then at $\partial_x$ on the both sides of $NL=L_1M$ (and recalling that $\psi$ is not zero), we have
\begin{equation*}
   a_1 = a \ , \quad b_1=b+n+\left( \ln \psi \right)_x \ , \quad c_1=na+a_x - \psi_y \psi^{-1} (\psi_x+b \psi) \ .
\end{equation*}
Comparing coefficients at $\partial_y$ and at $1$ (free), we have two equivalent equations of the form
\begin{equation}
  (\psi_x+b \psi) n +  \partial_x (\psi_x+b \psi) =0
\end{equation}

Case I. Assuming $\psi_x+b \psi \neq 0$, we obtain
\begin{equation*}
  n = - \frac{(b \psi)_x + \psi_{xx}}{b \psi + \psi_x} 
\end{equation*}
Substituting into the formulas above we get 
\begin{equation*}
 b_1=b- \partial_x \left(\ln T \right) \ , \quad c_1 =  a_x - a \partial_x \ln (T \psi ) -   T \partial_y (\ln \psi)
\end{equation*}
where we have introduced notation
\begin{equation}
T=\psi_x  \psi^{-1}+b \ .
 \end{equation}
In this notation we have 
\begin{equation}
 n = - \partial_x \ln (T \psi) 
\end{equation}
and also the case splitting condition $\psi_x+b \psi \neq 0$ is equivalent 
to $T \neq 0$.

Case II. Assume $\psi_x+b \psi = 0$, we obtain
a rather interesting small case: 
\begin{equation}
 a_1 = a \ , \quad b_1 = b \ , \quad c_ 1=n a+a_x
\end{equation}
and no constrains on $n$. 
 
It might be immediately interesting for the reader to look at Laplace invariants
of $L$ and $L_1$ ($h$, $k$ and $h_1$, $k_1$, respectively):
\begin{equation*}
  h =  a_x + \partial_x \partial_y \ln (\psi) \ , \quad
  k = 0 \ , \quad
  h_1=0 \ , \quad k_1= n_y -a_x \ .
\end{equation*}
So in this case the original operator $L$ is factorizable as
\begin{equation}
  L = \left(\partial_y + a \right) \left( \partial_x  + b \right) \ .
\end{equation}
This particular small but interesting case is further described in~\cite{shem_2014_factorization_of_DT_for_factorizable} (see more details in Sec.~\ref{sec:no}).

\end{proof}

\section*{Acknowledgement}
This material is based upon work supported by the National Science Foundation under
grant No.1708033.

\bibliographystyle{plain}

\begin{thebibliography}{10}

\bibitem{adler-marikhin-shabat:2001}
V.~{\`E}. Adler, V.~G. Marikhin, and A.~B. Shabat.
\newblock {Lagrangian lattices and canonical {B}{\"a}cklund transformations}.
\newblock {\em Theoret. and Math. Phys.}, 129(2):1448--1465, 2001.
  
 \bibitem{Athorne2018}
Ch.~Athorne,
\newblock{Laplace maps and constraints for a class of third-order partial differential operators}
\newblock {\em Journal of Physics A: Mathematical and Theoretical}, 51 (8),
085205, 2018.
	

\bibitem{bagrov-samsonov:factorization1995}
V.~G. Bagrov and B.~F. Samsonov.
\newblock {Darboux transformation, factorization and supersymmetry in
  one-dimensional quantum mechanics}.
\newblock {\em Theoret. and Math. Phys.}, 104(2):1051--1060, 1995.

\bibitem{clouds:bagrov:samsonov:97}
V.~G. Bagrov and B.~F. Samsonov.
\newblock {Darboux transformation of the {S}chr{\"o}dinger equation}.
\newblock {\em Phys. Part. Nucl.}, 28(4):374--397, 1997.

\bibitem{berest-veselov:2000}
Y.~Berest and A.~Veselov.
\newblock {On the structure of singularities of integrable {S}chr{\"o}dinger
  operators}.
\newblock {\em Lett. Math. Phys.}, 52(2):103--111, 2000.

\bibitem{berest-veselov:1998}
Yu.~Yu. Berest and A.~P. Veselov.
\newblock {Singularities of the potentials of exactly solvable
  {S}chr{\"o}dinger equations, and the {H}adamard problem}.
\newblock {\em Russian Math. Surveys}, 53(1(319)):208--209, 1998.

\bibitem{Ioffe_Junker99}
F.~Cannata, M.~Ioffe, G.~Junker, and D.~Nishnianidze.
\newblock {Intertwining relations of non-stationary Schr{\"o}dinger operators}.
\newblock {\em Journal of Physics A: Mathematical and General}, 32(19):3583,
  1999.

\bibitem{Cooper_Khare95}
F.~Cooper, A.~Khare, and U.~Sukhatme.
\newblock {Supersymmetry and quantum mechanics }.
\newblock {\em Physics Rep.}, 251(5-6):267--385, 1995.

\bibitem{Crum:1955:1999}
M.~M. Crum.
\newblock {Associated Sturm-Liouville systems}.
\newblock {\em Quart. J. Math. Oxford}, pages 121--127, August 1955.

\bibitem{Darboux2}
G.~Darboux.
\newblock {\em {Le\c{c}ons sur la th{\'e}orie g{\'e}n{\'e}rale des surfaces et
  les applications g{\'e}om{\'e}triques du calcul infinit{\'e}simal}},
  volume~2.
\newblock Gauthier-Villars, 1889.

\bibitem{doktorovleble2007dressing}
E.~Doktorov and S.~Leble.
\newblock {\em {A dressing method in mathematical physics}}, volume~28.
\newblock Springer Science \& Business Media, 2007.

\bibitem{etingof-gelfand-retakh:1997}
P.~Etingof, I.~Gelfand, and V.~Retakh.
\newblock {Factorization of differential operators, quasideterminants, and
  nonabelian {T}oda field equations}.
\newblock {\em Math. Res. Lett.}, 4(2-3):413--425, 1997.

\bibitem{FO1}
M.~Fels and P.~J. Olver.
\newblock {Moving coframes. {I}. {A} practical algorithm}.
\newblock {\em Acta Appl. Math.}, 51(2):161--213, 1998.

\bibitem{FO2}
M.~Fels and P.~J. Olver.
\newblock {Moving coframes. {I}{I}. {R}egularization and theoretical
  foundations}.
\newblock {\em Acta Appl. Math}, 55:127--208, 1999.

\bibitem{ganzha2013intertwining}
E.I. Ganzha.
\newblock {Algebraic and Algorithmic Aspects of Differential and Integral
  Operators: 5th International Meeting, AADIOS 2012, Held at the Applications
  of Computer Algebra Conference, ACA 2012, Sofia, Bulgaria, June 25-28, 2012,
  Selected and Invited Papers}.
\newblock pages 96--115, 2014.

\bibitem{2015:super}
S.~Hill, E.~Shemyakova, and Th. Voronov.
\newblock {Darboux transformations for differential operators on the
  superline}.
\newblock {\em Russian Mathematical Surveys}, 70(6):207--208, 2015.

\bibitem{shemya:hobby2016:iterated}
D.~Hobby and E.~Shemyakova.
\newblock {Classification of multidimensional Darboux transformations: first
  order and continued type}.
\newblock {\em SIGMA (Symmetry, Integrability and Geometry: Methods and
  Applications)}, 13(10):20 pages, 2017.

\bibitem{HOVHANNISYAN2018776}
G.~Hovhannisyan, O.~Ruff, and Z.~Zhang.
\newblock{Higher dimensional Darboux transformations}.
\newblock {\em Journal of Mathematical Analysis and Applications},
464(1): 776--805, 2018.

\bibitem{HOVHANNISYAN20161690}
G.~Hovhannisyan, O.~Ruff,
\newblock{Darboux transformations on a space scale}.
\newblock {\em Journal of Mathematical Analysis and Applications},
434(2): 1690 -- 1718, 2016.

\bibitem{infeld_hull_factorization_method}
L.~Infeld and T.~E. Hull.
\newblock {The Factorization Method}.
\newblock {\em Rev. Mod. Phys.}, 23:21--68, Jan 1951.

\bibitem{ioffe2010separation_of_vars}
M.~V. Ioffe.
\newblock {Supersymmetrical separation of variables in two-dimensional quantum
  mechanics}.
\newblock {\em SIGMA}, 6:75--85, 2010.

\bibitem{nimmo2010}
C.~X. Li and J.~J.~C. Nimmo.
\newblock {Darboux transformations for a twisted derivation and
  quasideterminant solutions to the super {K}d{V} equation}.
\newblock {\em Proc. R. Soc. Lond. Ser. A Math. Phys. Eng. Sci.},
  466(2120):2471--2493, 2010.

\bibitem{shemya:voronov2016:berezinians}
S.~Li, E.~Shemyakova, and Th. Voronov.
\newblock {Differential operators on the superline, Berezinians, and Darboux
  transformations}.
\newblock {\em Lett. Math. Phys.}, 107(9):1689--1714, 2017.

\bibitem{liu95_initial}
Q.~P. Liu.
\newblock {Darboux transformations for supersymmetric {K}orteweg-de {V}ries
  equations}.
\newblock {\em Lett. Math. Phys.}, 35(2):115--122, 1995.

\bibitem{liu_manas97}
Q.~P. Liu and M.~Ma{\~n}as.
\newblock {Crum transformation and {W}ronskian type solutions for
  supersymmetric {K}d{V} equation}.
\newblock {\em Phys. Lett. B}, 396(1-4):133--140, 1997.

\bibitem{liu_manas97_2}
Q.~P. Liu and M.~Ma{\~n}as.
\newblock {Darboux transformation for the {M}anin-{R}adul supersymmetric
  {K}d{V} equation}.
\newblock {\em Phys. Lett. B}, 394(3-4):337--342, 1997.

\bibitem{Matveev79}
V.~B. Matveev.
\newblock {Darboux transformation and explicit solutions of the
  Kadomtsev-Petviashvili equation, depending on functional parameters}.
\newblock {\em Lett. Math. Phys.}, 3(3):213--216, 1979.

\bibitem{matveev:1991:darboux}
V.~B. Matveev and M.~A. Salle.
\newblock {\em {Darboux transformations and solitons}}.
\newblock {Springer Series in Nonlinear Dynamics}. Springer-Verlag, Berlin,
  1991.

\bibitem{novikov-dynnikov:1997}
S.~P. Novikov and I.~A. Dynnikov.
\newblock {Discrete spectral symmetries of small-dimensional differential
  operators and difference operators on regular lattices and two-dimensional
  manifolds}.
\newblock {\em Russian Math. Surveys}, 52(5(317)):1057--1116, 1997.

\bibitem{OP:09}
P.~J. Olver and J.~Pohjanpelto.
\newblock {Differential invariant algebras of {L}ie pseudo-groups}.
\newblock {\em Adv. Math.}, 222(5):1746--1792, 2009.

\bibitem{samsonov:factorization1999}
B.~F. Samsonov.
\newblock {On the {$N$}-th order {D}arboux transformation}.
\newblock {\em Russian Math. (Iz. VUZ)}, 43(6):62--65, 1999.

\bibitem{shroedinger_on_DT}
E. Schr{\"o}dinger.
\newblock {Further studies of eigenvalue problems by factorization}.
\newblock {\em Proc. R. Irish Acad.}, 46:183--206, 1940.

\bibitem{shabat:1992}
A. Shabat.
\newblock {The infinite-dimensional dressing dynamical system}.
\newblock {\em Inverse Problems}, 8(2):303--308, 1992.

\bibitem{shabat1995}
A.~B. Shabat.
\newblock {On the theory of {L}aplace-{D}arboux transformations}.
\newblock {\em Theoret. and Math. Phys.}, 103(1):482--485, 1995.

\bibitem{2013:invertible:darboux}
E. Shemyakova.
\newblock {Invertible {D}arboux transformations}.
\newblock {\em SIGMA (Symmetry, Integrability and Geometry: Methods and
  Applications)}, 9:Paper 002, 10, 2013.

\bibitem{shem:Darboux2}
E. Shemyakova.
\newblock {Proof of the completeness of {D}arboux {W}ronskian formulae for
  order two}.
\newblock {\em Canad. J. Math.}, 65(3):655--674, 2013.

\bibitem{2015:inv:charts}
E. Shemyakova.
\newblock {Orbits of Darboux groupoid for hyperbolic operators of order three}.
\newblock In {Piotr Kielanowski}, S.~Twareque Ali, Alexander Odesski, Anatol
  Odzijewicz, Martin Schlichenmaier, and Theodore Voronov, editors, {\em
  {Geometric Methods in Physics.}}, {Trends in Mathematics. Springer, Basel},
  2015.

\bibitem{movingframes}
E. Shemyakova and E.~L. Mansfield.
\newblock {Moving frames for {L}aplace invariants}.
\newblock In {\em {Proceedings of the twenty-first international symposium on
  symbolic and algebraic computation}}, pages 295--302, New York, 2008. ACM.

\bibitem{shem_2014_factorization_of_DT_for_factorizable}
E.~S. Shemyakova.
\newblock {Darboux transformations for factorizable Laplace operators}.
\newblock {\em Programming and Computer Software (Russian Academy of
  Sciences)}, (2), 2014.
\newblock {S.~Abramov, S.~Tsarev eds.}


\bibitem{Smirnov2018}
S.~V. Smirnov.
\newblock {Factorization of {D}arboux-{L}aplace transformations for discrete hyperbolic operators}.
\newblock {\em Theoretical and Mathematical Physics}, 119(2):
621--636, 2019.	

\bibitem{ts:steklov_etc:00}
S.~P. Tsarev.
\newblock {Factorization of linear partial differential operators and the
  {D}arboux method for integrating nonlinear partial differential equations}.
\newblock {\em Theoret. and Math. Phys.}, 122(1):121--133, 2000.

\bibitem{TsarevS2009}
S.~P. Tsarev and E.~Shemyakova.
\newblock {Differential transformations of parabolic second-order operators in
  the plane}.
\newblock {\em Proceedings of the Steklov Institute of Mathematics},
  266(1):219--227, 2009.
  

\bibitem{veselov-shabat:dress1993}
A.~P. Veselov and A.~B. Shabat.
\newblock {A dressing chain and the spectral theory of the {S}chr{\"o}dinger
  operator}.
\newblock {\em Funct. Anal. Appl.}, 27(2):81--96, 1993.

\bibitem{witten1982}
E.~Witten.
\newblock {Supersymmetry and Morse theory}.
\newblock {\em J. Differential Geom.}, 17(4):661--692, 1982.

\end{thebibliography}
%

\end{document}